\documentclass[preprint,review,12pt]{elsarticle}
\usepackage{lineno}
\usepackage{graphicx}
\usepackage{amsmath}
\usepackage{amssymb}
\usepackage{setspace}
\usepackage{enumerate}
\usepackage{microtype}
\usepackage{xcolor}
\usepackage{hyperref}
\definecolor{deepgreen}{rgb}{0,0.5,0}
\hypersetup{
    colorlinks=true,
    citecolor={deepgreen}
}
\usepackage{url}

\usepackage{bm}
\usepackage{graphicx}
\usepackage{enumerate}
\usepackage{microtype}
\usepackage{multirow}
\usepackage{array}

\usepackage[ruled,linesnumbered,vlined]{algorithm2e}

\newtheorem{theorem}{Theorem}

\newcolumntype{C}[1]{>{\centering\let\newline\\\arraybackslash\hspace{0pt}}m{#1}}
\usepackage{lineno}
\journal{Journal}

\begin{document}

\begin{frontmatter}

\title{Low-distortion planar embedding of rod-based structures}

\author[label1]{Mark Yan Lok Yip}

\author[label1]{Gary P. T. Choi\corref{cor1}}
\cortext[cor1]{Corresponding author.}

\address[label1]{Department of Mathematics, The Chinese University of Hong Kong}

\begin{abstract}
 Rod-based structures are commonly used in practical applications in science and engineering. However, in many design, analysis, and manufacturing tasks, handling the rod-based structures in three dimensions directly is generally challenging. To simplify the tasks, it is usually more desirable to achieve a two-dimensional representation of the rod-based structures via some suitable geometric mappings. In this work, we develop a novel method for computing a low-distortion planar embedding of rod-based structures. Specifically, we identify geometrical constraints that aim to preserve key length and angle quantities of the 3D rod-based structures and prevent the occurrence of overlapping rods in the planar embedding. Experimental results with a variety of rod-based structures are presented to demonstrate the effectiveness of our approach. Moreover, our method can be naturally extended to the design and mapping of hybrid structures consisting of both rods and surface elements. Altogether, our approach paves a new way for the efficient design and fabrication of novel three-dimensional geometric structures for practical applications.

\end{abstract}

\begin{keyword}
rod-based structures, planar embedding, geometric mapping, constrained optimization, computational fabrication
\end{keyword}

\end{frontmatter}

\section{Introduction}
The design and analysis of geometric structures are important in many applications in science and engineering. In recent years, there has been an increasing interest in the use of wireframe or gridshell structures. For instance, structures consisting of strip elements can be utilized in the design of complex shapes in architecture~\cite{wang2023rectifying}. Structures consisting of rod-like elements are also commonly found in 3D or 4D printing~\cite{sydney2016biomimetic}. The high flexibility in such structures also makes them very suitable for achieving different shape transformation effects~\cite{risso2022highly,liu2023deployable}. Therefore, many recent approaches have focused on the design of different rod-based and gridshell structures, such as the elastic and thermoelastic materials~\cite{wang2004level}, elastic gridshells~\cite{baek2018form,qin2020genetic}, self-actuated shells for morphing into a prescribed 3D shape~\cite{guseinov2020programming}, surface discretization using strips~\cite{martin2021surface}, surface-based inflatables~\cite{panetta2021computational,ren2024computational}, and hybrid gridshell structures~\cite{schling2022designing}.

While these 3D rod-based and gridshell structures are highly flexible and widely applicable to different problems, the design and analysis of them require the consideration of various geometric aspects. Also, from the perspective of fabrication, storage, and transportation, directly handling the 3D structures may be inefficient. To simplify the process, one possible approach is to flatten the 3D structures and embed them on a planar domain. More specifically, the 2D representation of the structures will not only allow for easier design and comparison but also facilitate the practical manufacturing of the structures. However, to ensure that the 2D representation can accurately represent the original 3D rod-based structures, it is important to preserve certain key geometric properties of the 3D structures and minimize the overall geometric distortion of the embedding. Note that this core problem of computing a bijective and low-distortion planar mapping is closely related to domain and mesh parameterization~\cite{floater2005surface,sheffer2007mesh}. In particular, many parameterization methods have been developed based on conformal mapping~\cite{levy2002least,desbrun2002intrinsic,mullen2008spectral,jin2008discrete,choi2021efficient}, authalic mapping~\cite{desbrun2002intrinsic,zou2011authalic,zhao2013area,choi2018density}, as-rigid-as-possible mapping~\cite{sorkine2007rigid,liu2008local,wang2018novel}, quasi-conformal mapping~\cite{choi2022recent,pan2022constructing,pan2023g1}, and multi-patch methods~\cite{kapl2018construction,zou2025mat}. However, the nature of the 3D rod-based structures considered in this work presents a unique challenge. In particular, many existing parameterization methods consider the discrete meshes or structures as a representation of an underlying smooth surface in their methodological development and algorithmic design, while the 3D rod-based structures are solely a collection of vertices and rod segments without such an assumption. Hence, most of the relevant optimization formulations and computational schemes commonly used in prior parameterization methods are not directly applicable to the problem here.

\begin{figure}[t]
    \centering
    \includegraphics[width=\linewidth]{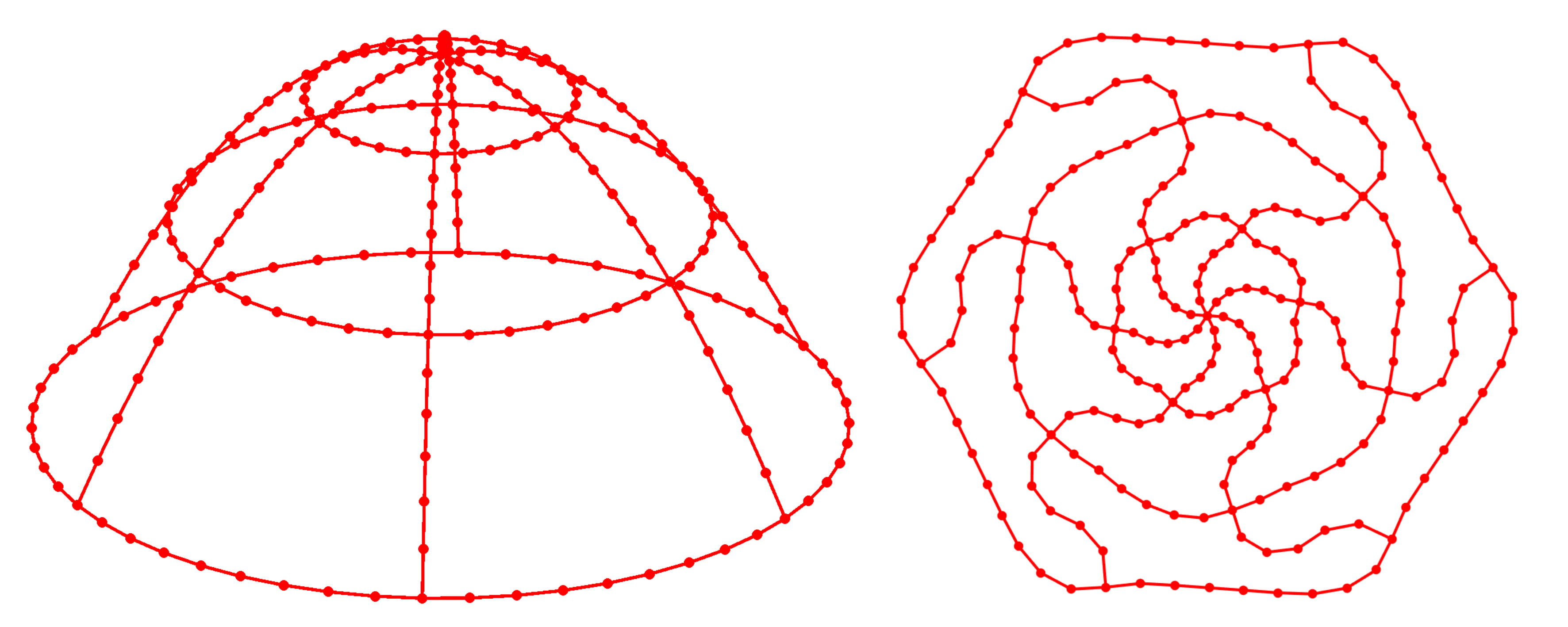}
    \caption{\textbf{An illustration of the proposed low-distortion planar embedding method for 3D rod-based structures in our work.} Given a 3D rod-based structure with a prescribed target shape (left), our proposed method aims to produce a planar configuration of the structure that possesses low geometric distortion (right) with key rod length and angle quantities preserved, thereby facilitating the modelling and manufacturing of the rod-based structures for practical applications. }
    \label{fig:illustration}
\end{figure}

In this work, we develop a novel method for achieving a low-distortion planar embedding of rod-based structures (Fig.~\ref{fig:illustration}). More specifically, our proposed method takes an arbitrary 3D rod-based structure consisting of vertices and short rod segments as input and produces a planar representation of it with minimal geometric distortion, preserving the lengths of the short rod segments and key intersection angles between rods. The method also automatically detects and corrects overlaps between rod segments to ensure a one-to-one correspondence between the planar representation and the original 3D structure. Experimental results on a large variety of rod-based structures are presented to demonstrate the effectiveness of our proposed method. We further extend the method for handling more complex hybrid structures and consider the 2D-to-3D morphing process to showcase the flexibility and applicability of our method.

The rest of the paper is organized as follows. In Section~\ref{sect:formulation}, we describe the formulation and algorithm of our proposed low-distortion planar embedding method, including the details of the optimization problem we consider and the corresponding objective and constraint functions. In Section~\ref{sect:experiment}, we test our proposed method on a large variety of rod-based structures with different geometric properties to demonstrate its effectiveness. In Section~\ref{sect:extension}, we further extend our computational framework for hybrid structures consisting of not only one-dimensional rods but also some surface regions. In Section~\ref{sect:morphing}, we simulate the 2D-to-3D morphing process of our planar embedding results to verify their practicability in real-world applications. We conclude the paper and discuss future directions in Section~\ref{sect:conclusion}.

\section{Proposed method} \label{sect:formulation}

Let $\mathcal{S} = (\mathcal{V}, \mathcal{E})$ be a 3D rod-based structure, where $\mathcal{V}$ is the vertex set with $m$ vertices in it and $\mathcal{E}$ is the edge set with $p$ edges in it. Each vertex $v_i = (x_i, y_i, z_i) \in \mathcal{V}$ is a node in $\mathbb{R}^3$, and each (undirected) edge $e_{i} = [e_i(1), e_i(2)] \in \mathcal{E}$ represents a short rod segment connecting node $v_{e_i(1)}$ and node $v_{e_i(2)}$. Our goal is to obtain a low-distortion planar embedding $f: \mathcal{S} \to \mathbb{R}^m \times \mathbb{R}^m$ that preserves the key geometric features of $\mathcal{S}$ including the lengths of the short rod segments and the intersection angles between rods as much as possible. Mechanically, the edges in the rod-based structures we consider are rigid, straight rod segments. We consider the joints between the rod segments as spherical (ball-and-socket) joints, which allow two connected rod segments to rotate freely relative to each other. Under the planar embedding, the rod-based structure should be flattened without any stretching or bending of the rod segments and without any disconnections of the joints. We further define the major joints as the joints that are the intersection points of more than two rod segments. Because of the high mechanical complexity of such major joints, we will further impose geometric constraints and preserve the angles at them in our embedding. The set $\mathcal{P} = f(\mathcal{S})$ will exactly be the vertex set of the planar embedding, and hence $(\mathcal{P},\mathcal{E_{\text{2D}}})$ will be the desired 2D representation of the rod-based structure where $\mathcal{E_{\text{2D}}}$ contains all edges in the form of $[f(v_i), f(v_j)]$ with $[v_i, v_j] \in \mathcal{E}$.

\subsection{Initial embedding}
The first step of our proposed method is to construct an initial embedding $f_0: \mathcal{S} \to \mathbb{R}^{m} \times \mathbb{R}^{m}$ of the given 3D rod-based structure. With the initial embedding, the subsequent computations can then be handled as a 2D-to-2D optimization problem more easily, as we can fix the dimension of the structure in $\mathbb{R}^2$ during the optimization process. 

To achieve this initial embedding, one possible approach is to compute a Tutte embedding~\cite{tutte1963draw}. More specifically, we solve the following equation
\begin{equation}
    L \mathbf{v} = 0,
\end{equation}
where $\mathbf{v}$ are the desired 2D coordinates of the vertices in the initial embedding and $L$ is a $m\times m$ matrix known as the graph Laplacian:
\begin{equation}
    L_{ij} = \left\{\begin{array}{cl}
    1, & \text{if } \space[v_i,v_j]\in \mathcal{E},\\
    -\sum\limits_{\substack{k=1, \ k\neq i}}^{m}L_{ik}, & \text{if }\space i=j,\\
    0, & \text{otherwise,}
    \end{array} \right.
\end{equation}
subject to some prescribed constraints $f_0(v_{bdy_i}) = w_i$ for all outermost vertices $v_{bdy_1}, v_{bdy_2}, \dots, v_{bdy_k}$, with $w_i$ being some prescribed position for the vertex $v_{bdy_i}$ for this initial step. By solving the above-mentioned equation, we can obtain a set of vertex coordinates $\mathcal{P}_0 = f_0(\mathcal{S})$ on the plane, which serves as our initialization. 

We remark that besides the Tutte embedding method, one may also consider other methods for efficiently obtaining the initial embedding. For instance, for some rod-based structures with relatively simple geometry, one may directly compute a projection $(x,y,z) \to (x,y)$ for the initial embedding. It may also be possible to combine some mesh flattening methods with other planar transformations to construct the initial embedding.

\subsection{Shape optimization}
Once we have obtained the initial embedding $f_0: \mathcal{S} \to \mathcal{P}_0$, we focus on solving a planar shape optimization problem and search for an optimized planar mapping $g: \mathcal{P}_0 \to \mathcal{P}$ that satisfies certain geometrical constraints. Below, we describe the geometrical constraints and their formulations in detail. Also, to enhance the computational efficiency and accuracy of the optimization process, we further derive the explicit formulas of the Jacobian of all constraints and objective functions.

First, note that to achieve a meaningful and accurate planar representation of the 3D rod-based structures, it is natural to consider the distortion of the rod segments and their joints. Specifically, it is desired that the planar shape in the optimized form should preserve the rod segment lengths and the angles at the major joints of rods. Also, there should be no overlaps between the rods in the planar representation. This motivates us to consider three major constraints, namely (i) the \emph{length-preserving constraints}, (ii) the \emph{angle-preserving constraints}, and (iii) the \emph{no-overlap constraints}.

\begin{figure}[t]
    \centering
    \includegraphics[width=0.75\linewidth]{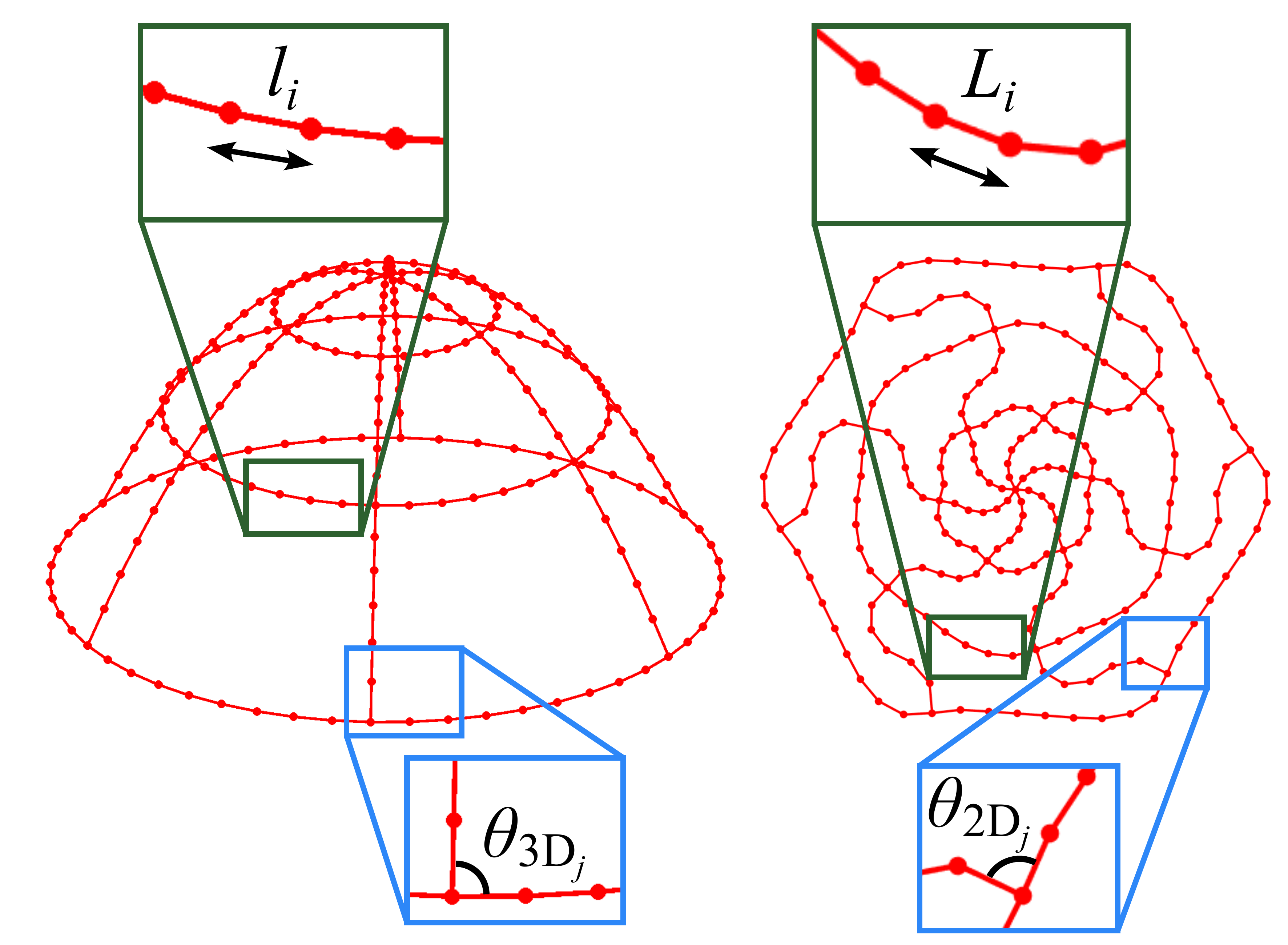}
    \caption{\textbf{An illustration of length-preserving and angle-preserving constraints.} As highlighted in the green boxes, the length-preserving constraints aim to ensure that the corresponding rods in the original 3D rod-based structure (left) and the 2D planar embedding (right) are equal in length, i.e., $l_i = L_i$ for all $i$. The blue boxes highlight the angle-preserving constraints, which aim to ensure that the corresponding angles at the major joints in 3D and 2D are equal, i.e., $\theta_{3D_j} = \theta_{2D_j}$.}
    \label{fig:illustration_length_angle}
\end{figure}

\subsubsection{Length-preserving constraints}
Specifically, for the length-preserving constraints, we aim to enforce the preservation of the length for all rod segments in the planar embedding. To achieve this, let $l_{i} = \|v_{e_i(1)} -v_{e_i(2)}\|$  to be the length of each rod segment $e_{i} = [e_i(1), e_i(2)]$  in the 3D rod-based structure $(\mathcal{V},\mathcal{E})$, and let $L_{i} = \|p_{e_i(1)} -p_{e_i(2)}\|$ be the length of the corresponding rod segment in the planar embedding $(\mathcal{P},\mathcal{E_{\text{2D}}})$. Here, $\|\cdot\|$ is defined to be the Euclidean norm. To preserve the length, it is desired to have $l_i = L_i$ (see Fig.~\ref{fig:illustration_length_angle} for an illustration). From a computational perspective, since different rods in the overall rod-based structures may have different lengths, to ensure a fair consideration of all rods, we have the following length-preserving constraint:
    \begin{equation} \label{eqt:constraint_length}
        E_{\text{L}}(e_i) = \frac{L_i}{l_i} - 1 = 0.
    \end{equation}
It is easy to see that $E_{\text{L}}(e_i) = 0$ if and only if $L_i = l_i$. Also, note that $E_{\text{L}}$ is dimensionless.

Now, we further derive the gradient of this constraint by differentiating it with respect to the vertices in the planar embedding $\mathcal{P}$. For each $i$, note that $l_i$ is given by the original 3D rod-based structure and can be treated as a constant. Let $p_{e_i(1)} = (x_{i_1}, y_{i_1})$ and $p_{e_i(2)} = (x_{i_2}, y_{i_2})$. We have
    \begin{equation}
    E_{\text{L}}(e_i) = \frac{\sqrt{(x_{i_1}-x_{i_2})^2+(y_{i_1}-y_{i_2})^2}}{l_i}-1.
    \end{equation}
Using the chain rule, we have
    \begin{equation}
        \frac{\partial E_\text{L}(e_i)}{\partial x_{i_1}} = \frac{2(x_{i_1}-x_{i_2})}{2\sqrt{(x_{i_1}-x_{i_2})^2+(y_{i_1}-y_{i_2})^2}\cdot l_i} = \frac{x_{i_1}-x_{i_2}}{L_i\cdot l_i}.
    \end{equation}
Similarly, we have
    \begin{equation}
    \resizebox{0.9\linewidth}{!}{$
        \frac{\partial E_\text{L}(e_i)}{\partial y_{i_1}} = \frac{y_{i_1}-y_{i_2}}{L_i\cdot l_i}, \quad
        \frac{\partial E_\text{L}(e_i)}{\partial x_{i_2}} = \frac{-(x_{i_1}-x_{i_2})}{L_i\cdot l_i}, \quad
        \frac{\partial E_\text{L}(e_i)}{\partial y_{i_2}} = \frac{-(y_{i_1}-y_{i_2})}{L_i\cdot l_i}.
        $}
    \end{equation}
Thus, if we consider all variables as $(x_1, x_2, \dots, x_{m}, y_1, y_2, \dots,  y_{m})$, where $m$ is the total number of vertices, we have
    \begin{equation} \label{eqt:gradient_length}
    \resizebox{1\linewidth}{!}{$
        \nabla E_\text{L}(e_i)  = \begin{bmatrix} 
        0 & \cdots & 0 & \frac{\partial E_\text{L}(e_i)}{\partial x_{i_1}} & 0 & \cdots & \frac{\partial E_\text{L}(e_i)}{\partial x_{i_2}} & 0 & \cdots & \frac{\partial E_\text{L}(e_i)}{\partial y_{i_1}}  & 0 & \cdots & \frac{\partial E_\text{L}(e_i)}{\partial y_{i_2}} &  \cdots & 0
        \end{bmatrix}^T.
        $}
    \end{equation}
In other words, for each $i$, there are exactly four non-zero entries in $\nabla E_\text{L}(e_i)$.
    
\subsubsection{Angle-preserving constraints}
Next, for the angle-preserving constraints, we aim to preserve the angles at the major joints of the rods, which are the vertices with vertex degree $\geq 3$. They correspond to the special location at which two chains of rods (each representing a curve) meet, instead of every intersection point between two adjacent rods. This ensures that the overall directions of the rod chains are largely preserved in the planar representation. From a more practical perspective, since such major joints are mechanically complex and are generally the more fragile part of the rod-based structures, preserving the angles at them would be beneficial for reducing the distortion induced locally at them during shape morphing or deformations (see Fig.~\ref{fig:illustration_length_angle} for an illustration).

First, we denote all angles at the major joints in both the original 3D rod-based structure and the 2D planar embedding as $\{\theta_{3D_j}\}_{j=1}^q$ and $\{\theta_{2D_j}\}_{j=1}^q$, where $q$ is the total number of angles, and every pair of angles $(\theta_{3D_j}, \theta_{2D_j})$ are the angles formed the corresponding vertices in 3D and 2D. The angles can be easily expressed in terms of the vertex coordinates as detailed in~\cite{choi2019programming}. Then, for every $j = 1, 2, \dots, q$, we consider the angle-preserving constraint 
    \begin{equation} \label{eqt:constraint_angle}
        E_{\text{A}}(j)= \cos(\theta_{2D_j})-\cos(\theta_{3D_j}) = 0.
    \end{equation}
Here, we use $\cos(\theta_{2D_j}), \cos(\theta_{3D_j})$ instead of $\theta_{2D_j}, \theta_{3D_j}$ so that the gradient of $E_{\text{A}}(j)$ can be expressed in terms of vertex coordinates more easily. Specifically, note that $\theta_{3D_j}$ is given by the original 3D rod-based structure and hence $\cos(\theta_{3D_j})$ can be treated as a constant. Now, for each angle $\theta_{2D_j}$, we look up the three corresponding vertices in the planar embedding. Denote them as
 $p_{\theta_j(1)} =(x_{j_1},y_{j_1})$, $p_{\theta_j(2)} =(x_{j_2},y_{j_2})$, and $p_{\theta_j(3)} =(x_{j_3},y_{j_3})$. We further let
   \begin{equation}
        \vec{a} = \begin{bmatrix}
            x_{j_2}\\y_{j_2}
        \end{bmatrix} 
        - \begin{bmatrix}
            x_{j_1}\\y_{j_1}
        \end{bmatrix} \ \ \ \text{ and } \ \ \ 
        \vec{b} = \begin{bmatrix}
            x_{j_3}\\y_{j_3}
        \end{bmatrix} 
        - \begin{bmatrix}
            x_{j_1}\\y_{j_1}
        \end{bmatrix}.
    \end{equation}
Then, we have
    \begin{equation}
        \cos(\theta_{2D}) = \frac{\vec{a} \cdot \vec{b}}{||\vec{a}|| \ ||\vec{b}||}.
    \end{equation}
By writing $c = \vec{a} \cdot \vec{b}$, we can take partial derivatives of $c$ with respect to $x_{j_1}$ and $y_{j_1}$:
    \begin{equation}
        \frac{\partial c}{\partial x_{j_1}} = 2x_{j_1} - x_{j_2} - x_{j_3},  \ \ \frac{\partial c}{\partial y_{j_1}} = 2y_{j_1} - y_{j_2} - y_{j_3}.
    \end{equation}
 Similarly, we can take partial derivatives of $c$ with respect to $x_{j_2}$, $y_{j_2}$, $x_{j_3}$, $y_{j_3}$ and get:
    \begin{equation}
    \resizebox{1\linewidth}{!}{$
        \frac{\partial c}{\partial x_{j_2}} = x_{j_3} - x_{j_1},  \quad  \frac{\partial c}{\partial y_{j_2}} = y_{j_3} - y_{j_1}, \quad  \frac{\partial c}{\partial x_{j_3}} = x_{j_2} - x_{j_1},   \quad  \frac{\partial c}{\partial y_{j_3}} = y_{j_2} - y_{j_1}.
        $}
    \end{equation}
Next, we take the partial derivatives of $||\vec{a}||$ with respect to $x_{j_1}$, $y_{j_1}$:
    \begin{equation}
        \frac{\partial ||\vec{a}||}{\partial x_{j_1}} = \frac{x_{j_1} - x_{j_2}}{||\vec{a}||}, \quad \frac{\partial ||\vec{a}||}{\partial y_{j_1}} = \frac{y_{j_1} - y_{j_2}}{||\vec{a}||}, 
    \end{equation}
Similarly, for $x_{j_2}$, $y_{j_2}$, we have
     \begin{equation} 
        \frac{\partial ||\vec{a}||}{\partial x_{j_2}} = \frac{x_{j_2} - x_{j_1}}{||\vec{a}||}, \quad
        \frac{\partial ||\vec{a}||}{\partial y_{j_2}} = \frac{y_{j_2} - y_{j_1}}{||\vec{a}||}.
    \end{equation}
Since $\vec{a}$ does not involve $x_{j_3}$, $y_{j_3}$, it is easy to see that
    \begin{equation}
        \frac{\partial ||\vec{a}||}{\partial x_{j_3}} = \frac{\partial ||\vec{a}||}{\partial y_{j_3}} = 0.
    \end{equation}
Analogously, for $ ||\vec{b}||$, we have:
    \begin{align}
        \frac{\partial ||\vec{b}||}{\partial x_{j_1}} = \frac{x_{j_1} - x_{j_3}}{||\vec{b}||}, &\quad
        \frac{\partial ||\vec{b}||}{\partial y_{j_1}} = \frac{y_{j_1} - y_{j_3}}{||\vec{b}||}, \\
        \frac{\partial ||\vec{b}||}{\partial x_{j_2}} &= \frac{\partial ||\vec{b}||}{\partial y_{j_2}} = 0, \\
        \frac{\partial ||\vec{b}||}{\partial x_{j_3}} = \frac{x_{j_3} - x_{j_1}}{||\vec{b}||}, &\quad
        \frac{\partial ||\vec{b}||}{\partial y_{j_3}} = \frac{y_{j_3} - y_{j_1}}{||\vec{b}||}.
    \end{align}
    Thus, for $k = {1,2,3}$, the partial derivative of the angle constraint $E_\text{A}(j)$ with respect to all $x_{j_k}, y_{j_k}$ are:
    \begin{equation}
        \frac{\partial E_\text{A}(j)}{\partial x_{j_k}} = \frac{\frac{\partial c}{\partial x_{j_k}}||\vec{a}||||\vec{b}||-(\vec{a} \cdot \vec{b})(\frac{\partial ||\vec{a}||}{\partial x_{j_k}}||\vec{b}|| + \frac{\partial ||\vec{b}||}{\partial x_{j_k}}||\vec{a}||)}{||\vec{a}||^2||\vec{b}||^2}
    \end{equation}
    and
        \begin{equation}
        \frac{\partial E_\text{A}(j)}{\partial y_{j_k}} = \frac{\frac{\partial c}{\partial y_{j_k}}||\vec{a}||||\vec{b}||-(\vec{a} \cdot \vec{b})(\frac{\partial ||\vec{a}||}{\partial y_{j_k}}||\vec{b}|| + \frac{\partial ||\vec{b}||}{\partial y_{j_k}}||\vec{a}||)}{||\vec{a}||^2||\vec{b}||^2},
    \end{equation}
    in which all the intermediate partial derivative terms have already been derived above.
 
Thus, if we consider all variables as $(x_1, x_2, \dots, x_{m}, y_1, y_2, \dots,  y_{m})$, we have
    \begin{equation} \label{eqt:gradient_angle}
    \resizebox{1\linewidth}{!}{$
        \nabla E_\text{A}(j)  = 
        \begin{bmatrix} 
        0 & \cdots  & \frac{\partial E_\text{A}(j)}{\partial x_{j_1}}  & \cdots & \frac{\partial E_\text{A}(j)}{\partial x_{j_2}}  & \cdots & \frac{\partial E_\text{A}(j)}{\partial x_{j_3}}  & \cdots & \frac{\partial E_\text{A}(j)}{\partial y_{j_1}}  & \cdots & \frac{\partial E_\text{A}(j)}{\partial y_{j_2}}  & \cdots & \frac{\partial E_\text{A}(j)}{\partial y_{j_3}} &  \cdots & 0
        \end{bmatrix}^T.
        $}
    \end{equation}
In other words, for each $j$, there are exactly six non-zero entries in $\nabla E_\text{A}(j)$.

\subsubsection{No-overlap constraints}
As discussed previously, the third type of constraints in our formulation is the no-overlap constraints, which aim to prevent the overlap between rod segments in the planar embedding result. While the most direct way is to consider all pairs of rod segments and check all possible occurrences of overlaps, the large number of combinations is computationally expensive. Moreover, designing a suitable constraint function for preventing such occurrences is highly nontrivial. Therefore, instead of directly working with the overlaps, we propose an alternative implicit formulation for the no-overlap constraints.  

More specifically, denote $\widetilde{\mathcal{B}}$ as the collection of all boundary vertices of the initial planar embedding. We first consider building a triangulation $\{\widetilde{T}_1, \widetilde{T}_2, \dots,\widetilde{T}_n\}$ on the initial planar embedding result so that the sum of all triangle areas is equal to the area enclosed by the vertices in $\widetilde{\mathcal{B}}$, i.e.,
\begin{equation}
    \sum_{i = 1}^n  \text{Area}(\widetilde{T}_i) = \text{Area}(\widetilde{\mathcal{B}}).
\end{equation}
Now, to enforce that there is no overlap in the shape optimization result, we consider the following no-overlap constraint:
\begin{equation} \label{eqt:constraint_overlap}
    E_\text{O} = \sum_{i = 1}^n  \text{Area}(T_i) - \text{Area}(\mathcal{B}) = 0,
\end{equation}
where $T_i$ represents the $i$-th triangle with the updated positions $p_{T_i(1)} = (x_{i_1},y_{i_1})$, $p_{T_i(2)} = (x_{i_2},y_{i_2})$ and $p_{T_i(3)} = (x_{i_3},y_{i_3})$ induced by the above triangulation, and $\mathcal{B}$ represents the updated coordinates of the vertices on the boundary induced by $\widetilde{\mathcal{B}}$. To justify this formulation, we establish the following result:
\begin{theorem}
If the area enclosed by all boundary vertices in the shape optimization result is equal to the area sum of all triangles, then there is no overlap in the shape optimization result.
\end{theorem}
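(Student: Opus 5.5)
The plan is to argue by contrapositive through a covering argument. Let $\Omega \subset \mathbb{R}^2$ be the region enclosed by the deformed boundary polygon $\mathcal{B}$, so that $\text{Area}(\mathcal{B}) = |\Omega|$. Let $U = \bigcup_{i=1}^n T_i$ be the union of the deformed triangles.

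We use the fact that the triangulation $\{\widetilde T_i\}$ of the initial embedding tiles the region enclosed by $\widetilde{\mathcal{B}}$: its triangles have disjoint interiors, and their boundary edges are exactly the edges of $\widetilde{\mathcal{B}}$. This is what the hypothesis $\sum_i \text{Area}(\widetilde T_i) = \text{Area}(\widetilde{\mathcal{B}})$ encodes. The deformed triangles keep the same combinatorics, so the boundary of the triangulated complex is still mapped onto the closed curve $\mathcal{B}$.

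The first key step is a covering claim: $\Omega \subseteq U$. Take any point $z \in \Omega$ not lying on any triangle edge. Consider the piecewise-linear map $\phi$ from the initial triangulated disk onto the plane, defined by the vertex positions. The degree of $\phi$ at $z$ equals the winding number of $\phi(\partial) = \mathcal{B}$ around $z$. If $\mathcal{B}$ is a simple closed curve, this winding number is $\pm 1$. Since the degree is nonzero, $z$ has at least one preimage, so $z$ lies in some $T_i$. Taking closures gives $\Omega \subseteq U$.

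If one prefers to avoid degree theory, there is a more elementary route. Write the signed area identity $\sum_i \text{SignedArea}(T_i) = \text{SignedArea}(\mathcal{B})$, obtained by the shoelace formula: interior edges cancel. Then compare it with the unsigned sum.

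The second step is counting multiplicity. Define $N(z) = \#\{i : z \in \text{int}\, T_i\}$. Then
\begin{equation*}
\sum_i \text{Area}(T_i) = \int_{\mathbb{R}^2} N(z)\, dz \ \geq\ \int_{U} 1\, dz \ \geq\ |\Omega| = \text{Area}(\mathcal{B}).
\end{equation*}
The hypothesis of the theorem forces equality throughout. Hence $N(z) = 1$ almost everywhere on $U$, and $U \setminus \Omega$ has measure zero. So the deformed triangles have pairwise disjoint interiors, no triangle is flipped, and together they tile $\Omega$. On flipping: a flipped triangle contributes negatively to the signed sum but positively to the unsigned sum. This forces a strict inequality, using the signed-area identity from the first step.

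The last step transfers this to the rods. Every rod segment of $\mathcal{E}_{2D}$ is an edge of the triangulation, since the triangulation was built on the embedding's vertices and edges. An overlap between two rods would mean two edges crossing or sharing an interior segment. Such a crossing forces the adjacent triangles of those edges to overlap in a set of positive area, contradicting $N \le 1$ almost everywhere. A nondegenerate triangle near a crossing point covers a small neighbourhood on one side of each edge, which is what makes that area positive.

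The main obstacle is the first step, in particular justifying that $\mathcal{B}$ itself is simple and that the boundary of the triangulation corresponds exactly to $\mathcal{B}$. If the boundary polygon self-intersects, "area enclosed" must be interpreted carefully, for instance via the absolute value of the shoelace formula. I would handle this by defining $\text{Area}(\mathcal{B})$ through the signed (shoelace) area. I would then use the chain $|\text{SignedArea}(\mathcal{B})| = |\sum_i \text{SignedArea}(T_i)| \le \sum_i \text{Area}(T_i)$, where equality holds only if all triangles share an orientation. With all triangles consistently oriented, the map $\phi$ is locally injective away from vertices. The degree argument then yields global injectivity, which is exactly the no-overlap conclusion.
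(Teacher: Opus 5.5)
Your Steps 1--3 follow the paper's double-counting argument, but they supply what the appendix omits. The appendix asserts that an overlap gives $T_k\cap T_j=\mathcal{O}$ of positive area counted twice, so $\sum_i\text{Area}(T_i)=\text{Area}(\mathcal{O})+\text{Area}(\mathcal{B})$. That identity silently assumes the deformed triangles cover exactly the region enclosed by $\mathcal{B}$, and it never argues the passage from a rod overlap to a positive-area triangle overlap. Your winding-number lemma $\Omega\subseteq U$ and the count $\int N\ge|U|\ge|\Omega|$ supply the first point, and Step 3 supplies the second. For simple $\mathcal{B}$ this is sound, modulo the two hypotheses in the last paragraph.

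\textbf{The final paragraph is wrong.} Reading $\text{Area}(\mathcal{B})$ as the shoelace area does not rescue a non-simple boundary. Consistent orientation gives local injectivity away from vertices, but the degree equals the winding number of $\mathcal{B}$, which can be $2$. For a counterexample, take a fan of nine triangles $(c,b_k,b_{k+1})$, $k\in\mathbb{Z}/9$, and deform it by $c\mapsto 0$, $b_k\mapsto(\cos\frac{4\pi k}{9},\sin\frac{4\pi k}{9})$. Every triangle is positively oriented with area $\frac12\sin\frac{4\pi}{9}$, and the shoelace area of $b_0b_1\cdots b_8$ is $\frac92\sin\frac{4\pi}{9}$, so the equality holds. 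Yet $\mathcal{B}$ is the star polygon $\{9/2\}$, $N=2$ near $0$, and the edge $cb_5$ crosses $b_0b_1$ at its midpoint. So simplicity of $\mathcal{B}$ must be part of the hypothesis (the words \emph{area enclosed} presuppose it); it cannot be derived or bypassed. This matters in practice: the paper's $E_{\text{O}}=H-S$ uses exactly this signed $S$, so $E_{\text{O}}=0$ alone does not exclude overlaps.

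\textbf{Step 3 needs two further hypotheses.} It uses two facts that do not follow from the hypothesis; without them the statement is false, and the appendix assumes them too.
\begin{enumerate}
\item Every rod must be a triangulation edge. You derive this from the construction, but the paper uses a plain Delaunay triangulation of the nodes (Fig.~\ref{fig:illustration_triangulation}), for which it fails. Take boundary $p=(0,0)$, $u=(1,-1)$, $q=(2,0)$, $w=(1,2)$ and interior node $v=(1,\frac{9}{10})$. The Delaunay triangulation is the fan about $v$, which omits the rod $pq$. Moving $v$ to $(1,-\frac12)$ keeps all four triangles positively oriented and tiling the quadrilateral, yet the rods $pq$ and $vw$ cross. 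You need a constrained triangulation.
\item The deformed triangles must be nondegenerate. Take boundary $a=(0,0)$, $d=(1,-1)$, $b=(2,0)$, $e=(1,2)$ with triangles $abd,abc,ace,cbe$. Moving $c$ from $(1,1)$ to $(1,0)$ collapses $abc$ and keeps the area identity exact with $\mathcal{B}$ simple, yet $ac$ and $cb$ now lie along $ab$. Your \emph{nondegenerate triangle covers a half-disc} remark needs this, and it is exactly where the paper's claim that a nonempty $T_k\cap T_j$ has positive area fails.
\end{enumerate}
With simplicity of $\mathcal{B}$ and these two conditions stated as hypotheses, your argument is a complete proof.
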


\textit{Proof. } The proof is straightforward and hence omitted here. See Appendix for the details.
\hfill $\blacksquare$

From the above result, we see that enforcing Eq.~\eqref{eqt:constraint_overlap} will effectively prevent the occurrence of overlaps throughout the shape optimization process. However, we remark that the other direction of the statement does not hold. In Fig.~\ref{fig:illustration_triangulation}, we show a simple counterexample. In particular, Fig.~\ref{fig:illustration_triangulation}(a) shows an example of the initial embedding of a rod-based structure and the Delaunay triangulation constructed based on all its vertices. In Fig.~\ref{fig:illustration_triangulation}(b), we show a deformed configuration of the planar embedding. It is noteworthy that the deformed rod-based structure does not contain any rod overlaps. However, several overlaps can be found in the underlying triangulation under the deformation, indicating that the area enclosed by all boundary vertices in the deformed triangulation is not equal to the sum of all individual triangle areas. This example shows that the other direction of the theorem does not necessarily hold. In other words, while the condition can fulfill our needs for preventing overlaps between rods, it may be too strict in some cases and overconstrains our problem. In a later section, we will discuss how we can address this aspect by introducing an alternating minimization procedure in our overall proposed algorithm.

\begin{figure}[t!]
    \centering
    \includegraphics[width=0.8\textwidth]{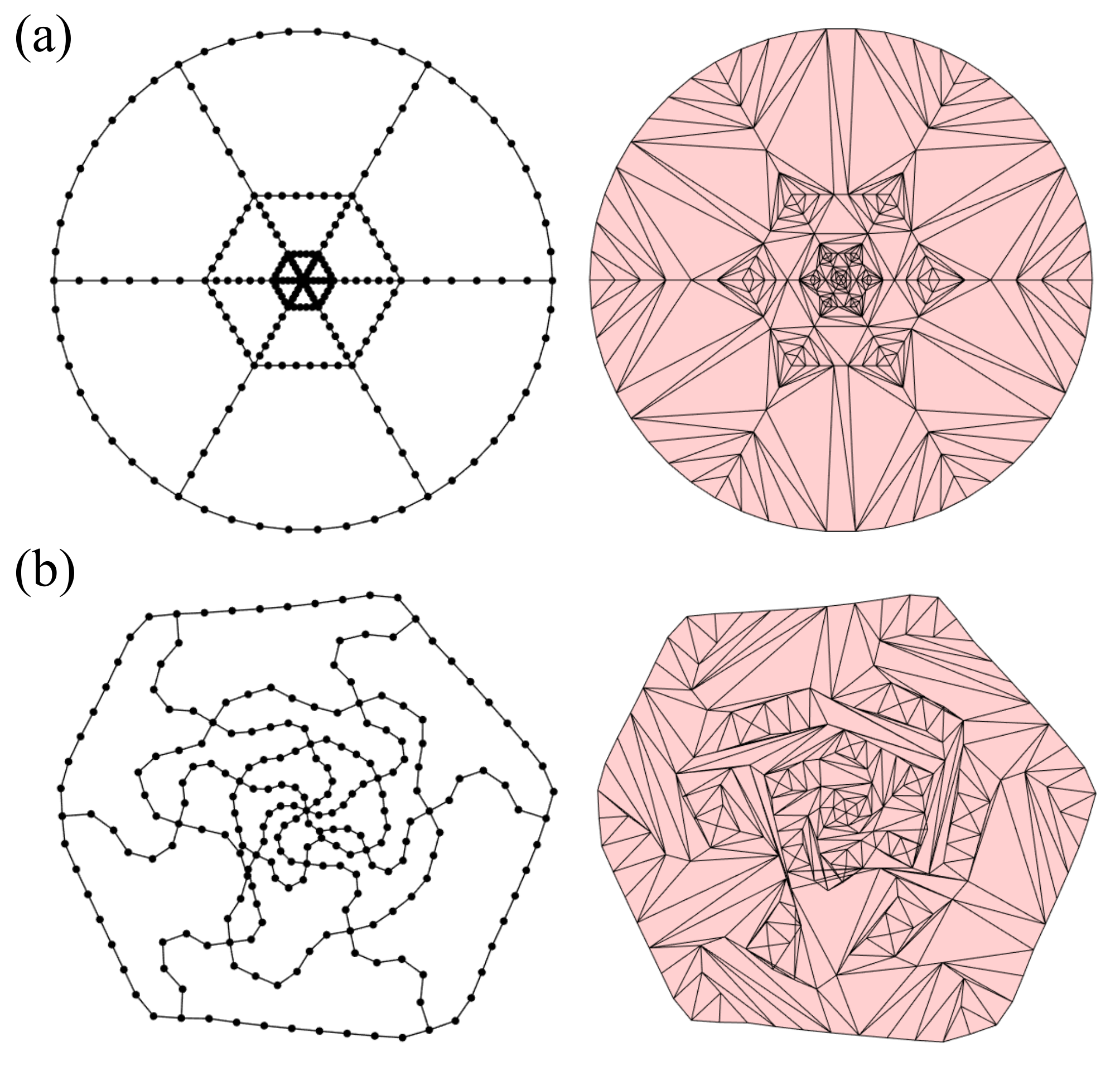}
    \caption{\textbf{An illustration of the construction of triangulations for the no-overlap constraint.} (a) An initial embedding of a rod-based structure and a Delaunay triangulation constructed on the set of all nodes. (b) An example of a deformed configuration of the planar embedding, with the associated triangulation induced from (a). It can be observed that the deformed rod-based structure does not contain overlaps, while several mesh overlaps can be found in the induced deformed triangulation, suggesting that the no-overlap constraint in Eq.~\eqref{eqt:constraint_overlap} is sufficient but not necessary for ensuring the no-overlap condition.}
    \label{fig:illustration_triangulation}
\end{figure}
     
Now, we further derive the gradient of the constraint $E_\text{O}$ in Eq.~\eqref{eqt:constraint_overlap} as follows. First, we express $E_\text{O} = H - S$ in terms of the vertices in $\mathcal{P}$, where $H$ represents the triangle area sum term and $S$ represents the boundary area term. We handle the two terms one by one. First, for each triangle $T_i$, we can let $p_{T_i(1)} = (x_{i_1},y_{i_1})$, $p_{T_i(2)} = (x_{i_2},y_{i_2})$ and $p_{T_i(3)} = (x_{i_3},y_{i_3})$. If we denote the length of the three edges of $T_i$ as $L_{12_i}$, $L_{23_i}$, $L_{31_i}$, we have
\begin{equation}
    L_{12_i} = \sqrt{(x_{i_1}-x_{i_2})^2+(y_{i_1}-y_{i_2})^2},
\end{equation} 
\begin{equation}
    L_{23_i} = \sqrt{(x_{i_2}-x_{i_3})^2+(y_{i_2}-y_{i_3})^2},
\end{equation} 
\begin{equation}
    L_{31_i} = \sqrt{(x_{i_3}-x_{i_1})^2+(y_{i_3}-y_{i_1})^2}.
\end{equation} 
Therefore, using the Heron's formula, the triangle area sum is given by
\begin{equation} \label{eqt:heron}
    H = \sum_{i=1}^n\sqrt{s_i(s_i-L_{12_i})(s_i-L_{23_i})(s_i-L_{31_i})},
\end{equation}
where \begin{equation}
    s_i = \frac{L_{12_i}+L_{23_i}+L_{31_i}}{2}
\end{equation}
for all $i = 1, 2, \dots, n$.

As for the boundary area term, if we denote the indices of the boundary vertices as $b_1, b_2, \dots, b_s$, using the shoelace formula, the boundary area is given by  
\begin{equation} \label{eqt:shoelace}
    S = \frac{1}{2}(x_{b_1}y_{b_2}+x_{b_2}y_{b_3}+\dots+x_{b_s}y_{b_1} - x_{b_2}y_{b_1}-x_{b_3}y_{b_2}-\dots-x_{b_1}y_{b_s}).
\end{equation}
From Eq.~\eqref{eqt:heron} and Eq.~\eqref{eqt:shoelace}, we see that $E_\text{O}$ can be expressed explicitly in terms of the vertices in $\mathcal{P}$. 

To get the gradient of $E_\text{O}$, we have to differentiate it with respect to the whole $\mathcal{P}$. For the triangulation part $H$, note that each $T_i$ only involves the vertex coordinates $x_{i_1}$, $x_{i_2}$, $x_{i_3}$ and $y_{i_1}$, $y_{i_2}$, $y_{i_3}$, and hence the derivative of the area term for $T_i$ with respect to all other variables can be omitted. Below, for convenience, we write $x_{i_1} = x_1$ and similarly for $x_2$, $x_3$, $y_1$,  $y_2$,  $y_3$ to illustrate the derivation once. For $\alpha$, $\beta$, $\gamma = 1,2,3$, where $\alpha \neq \beta \neq \gamma$, we can easily get
\begin{align}
    \frac{\partial L_{\alpha \beta}}{\partial x_{\alpha}} = \frac{x_\alpha - x_\beta}{L_{\alpha\beta}}, &\quad
    \frac{\partial L_{\alpha \beta}}{\partial y_{\alpha}} = \frac{y_\alpha - y_\beta}{L_{\alpha\beta}},\\
    \frac{\partial L_{\alpha \beta}}{\partial x_{\beta}} = \frac{-(x_\alpha - x_\beta)}{L_{\alpha\beta}},  &\quad  \frac{\partial L_{\alpha \beta}}{\partial y_{\beta}} = \frac{-(y_\alpha - y_\beta)}{L_{\alpha\beta}}, \\
    \frac{\partial L_{\alpha\beta}}{\partial x_\gamma} &= \frac{\partial L_{\alpha\beta}}{\partial y_\gamma} = 0.
\end{align}
Then, for $m=1,2,3$, we have
\begin{equation}
    \resizebox{0.9\linewidth}{!}{$
    \frac{\partial \text{Area}(T_i)}{\partial x_m} = \frac{4L_{12}(L_{23}^2+L_{31}^2-L_{12}^2)\frac{\partial L_{12}}{\partial x_k}+4L_{23}(L_{31}^2+L_{12}^2-L_{23}^2)\frac{\partial L_{23}}{\partial x_k}+4L_{31}(L_{12}^2+L_{23}^2-L_{31}^2)\frac{\partial L_{31}}{\partial x_k}}{8\sqrt{(L_{12}+L_{23}+L_{31})(L_{12}+L_{23}-L_{31})(L_{12}-L_{23}+L_{31})(-L_{12}+L_{23}+L_{31})}}
    $}
\end{equation}
and
\begin{equation}
    \resizebox{0.9\linewidth}{!}{$
    \frac{\partial \text{Area}(T_i)}{\partial y_m} = \frac{4L_{12}(L_{23}^2+L_{31}^2-L_{12}^2)\frac{\partial L_{12}}{\partial y_k}+4L_{23}(L_{31}^2+L_{12}^2-L_{23}^2)\frac{\partial L_{23}}{\partial y_k}+4L_{31}(L_{12}^2+L_{23}^2-L_{31}^2)\frac{\partial L_{31}}{\partial y_k}}{8\sqrt{(L_{12}+L_{23}+L_{31})(L_{12}+L_{23}-L_{31})(L_{12}-L_{23}+L_{31})(-L_{12}+L_{23}+L_{31})}}.
    $}
\end{equation}
Hence, for each $k = 1,2,3,\dots,m$, we simply need to add up the partial derivatives of all individual triangle area terms. We have:
\begin{equation}
    \frac{\partial H}{\partial x_k} = \sum_{i=1}^n\frac{\partial \text{Area}(T_i)}{\partial x_k} \quad \text{and} \quad
    \frac{\partial H}{\partial y_k} = \sum_{i=1}^n\frac{\partial \text{Area}(T_i)}{\partial y_k}.
\end{equation}
This shows how the partial derivatives of $H$ with respect to each vertex coordinate can be derived.

For the boundary area term $S$, it suffices to differentiate it with respect to the coordinates of all boundary vertices $\mathcal{B}$, i.e., $x_{b_1},\space x_{b_2},\space x_{b_3},\dots,\space x_{b_s}$ and $y_{b_1},\space y_{b_2},\space y_{b_3},\dots,\space y_{b_s}$. For $k = 1, \dots,s$, from Eq.~\eqref{eqt:shoelace} we can easily get
\begin{equation}
    \frac{\partial S}{\partial x_{b_k}} = \frac{y_{b_{(k+1)}} - y_{b_{(s+1-k)}}}{2} \quad \text{and} \quad
    \frac{\partial S}{\partial y_{b_k}} = \frac{x_{b_{(s+1-k)}} - x_{b_{(k+1)}}}{2}.
\end{equation}
For all other vertices $p_k\in \mathcal{P} \setminus \mathcal{B}$, we have $\frac{\partial S}{\partial x_k} = \frac{\partial S}{\partial y_k} = 0$.

Putting the above results together, the gradient of the no-overlap constraint is given by
\begin{equation} \label{eqt:gradient_overlap}
    \resizebox{0.25\linewidth}{!}{$
        \nabla E_\text{O} = \begin{bmatrix}
            \partial_{x_1}{H} - \partial_{x_1}{S}\\
            \partial_{x_2}{H} - \partial_{x_2}{S}\\
            \partial_{x_3}{H} - \partial_{x_3}{S}\\
            \vdots\\
            \partial_{x_m}{H} - \partial_{x_m}{S}\\
            \partial_{y_1}{H} - \partial_{y_1}{S}\\
            \partial_{y_2}{H} - \partial_{y_2}{S}\\
            \partial_{y_3}{H} - \partial_{y_3}{S}\\
            \vdots\\
            \partial_{y_m}{H} - \partial_{y_m}{S}\\
            \end{bmatrix}.
            $}
    \end{equation}

\subsubsection{The objective function and the constrained optimization formulation}

After describing all constraints to be satisfied, we move on to the design of the objective function in our shape optimization problem. Note that in the constraints, we have primarily focused on the length of each individual rod segment and the angles at the major joints. To further achieve a low overall geometric distortion of the structure, we may consider the remaining angle quantities in the structure that we have not yet covered, i.e., the angles at the remaining joints between adjacent individual rod segments. It is easy to see that if the values of these angles in the planar embedding are largely similar to those in the 3D structure, then it implies that there are fewer unnatural bends or sharp orientation changes among neighboring rods, and hence the overall geometric distortion of the embedding should be small.

Motivated by the above, we consider the following angle-based objective function: 
\begin{equation} \label{eqt:objective_function}
    E = \sum_{i=1}^r \left(\cos(\phi_{2D_{i}}) - \cos(\phi_{3D_{i}}) \right)^2,
\end{equation}
where $r$ is the total number of angles between adjacent rod segments, $\phi_{2D_{i}}$ is the angle value in the planar embedding, and $\phi_{3D_{i}}$ is the corresponding angle value in the original 3D structure. To simplify our discussion below, we further denote 
\begin{equation}
    E_i = \left(\cos(\phi_{2D_{i}}) - \cos(\phi_{3D_{i}}) \right)^2.
\end{equation}

Analogous to the discussion of the angle-preserving constraint in Eq.~\eqref{eqt:constraint_angle}, here note that we can express $E$ and its gradient in terms of the 2D vertex coordinates. Again, all $\phi_{3D_{i}}$ values are given by the original 3D structure and hence $\cos(\phi_{3D_{i}})$ can be treated as a constraint. For the term $\cos(\phi_{2D_{i}})$, note that each angle involves three vertices and we denote them as $p_{\phi_{2D_i}(1)} = (x_1,y_1)$,  $p_{\phi_{2D_i}(2)} = (x_2,y_2)$, $p_{\phi_{2D_i}(3)} = (x_3,y_3)$. Then we have
\begin{equation}
    \resizebox{0.7\linewidth}{!}{$
        \vec{a} = \begin{bmatrix}
            x_2\\y_2
        \end{bmatrix} 
        - \begin{bmatrix}
            x_1\\y_1
        \end{bmatrix}, \quad
        \vec{b} = \begin{bmatrix}
            x_3\\y_3
        \end{bmatrix} 
        - \begin{bmatrix}
            x_1\\y_1
        \end{bmatrix}, \quad
        \cos(\phi_{2D_i}) = \displaystyle \frac{\vec{a} \cdot \vec{b}}{||\vec{a}||\space||\vec{b}||}.
        $}
    \end{equation}
  Letting $c = \vec{a} \cdot \vec{b}$, we get
\begin{align}
        \frac{\partial c}{\partial x_1} = 2x_1 - x_2 - x_3, & \quad
        \frac{\partial c}{\partial y_1} = 2y_1 - y_2 - y_3, \\
       \frac{\partial c}{\partial x_2} = x_3 - x_1, &\quad
        \frac{\partial c}{\partial y_2} = y_3 - y_1, \\
        \frac{\partial c}{\partial x_3} = x_2 - x_1, &\quad
        \frac{\partial c}{\partial y_3} = y_2 - y_1.
    \end{align}
    Also, 
\begin{align}
        \frac{\partial ||\vec{a}||}{\partial x_1} = \frac{x_1 - x_2}{||\vec{a}||}, & \quad
        \frac{\partial ||\vec{a}||}{\partial y_1} = \frac{y_1 - y_2}{||\vec{a}||}, \\
        \frac{\partial ||\vec{a}||}{\partial x_2} = \frac{x_2 - x_1}{||\vec{a}||}, & \quad
        \frac{\partial ||\vec{a}||}{\partial y_2} = \frac{y_2 - y_1}{||\vec{a}||}, \\
        \frac{\partial ||\vec{a}||}{\partial x_3} & = \frac{\partial ||\vec{a}||}{\partial y_3} = 0,
    \end{align}
    and
    \begin{align}
        \frac{\partial ||\vec{b}||}{\partial x_1} = \frac{x_1 - x_3}{||\vec{b}||}, &\quad
        \frac{\partial ||\vec{b}||}{\partial y_1} = \frac{y_1 - y_3}{||\vec{b}||}, \\
        \frac{\partial ||\vec{b}||}{\partial x_2} &= \frac{\partial ||\vec{b}||}{\partial y_2} = 0, \\
        \frac{\partial ||\vec{b}||}{\partial x_3} = \frac{x_3 - x_1}{||\vec{b}||}, &\quad
        \frac{\partial ||\vec{b}||}{\partial y_3} = \frac{y_3 - y_1}{||\vec{b}||}.
    \end{align}
From the above, for each term $E_i$ and $k = 1,2,3$, we have
\begin{equation}
    \resizebox{0.9\linewidth}{!}{$
        \frac{\partial E_i}{\partial x_k} = \frac{2\left(\frac{\vec{a} \cdot \vec{b}}{||\vec{a}||||\vec{b}||}-\cos(\phi_{3D_{i}})\right) \left[\frac{\partial c}{\partial x_k}||\vec{a}||||\vec{b}||-(\vec{a} \cdot \vec{b})(\frac{\partial ||\vec{a}||}{\partial x_k}||\vec{b}|| + \frac{\partial ||\vec{b}||}{\partial x_k}||\vec{a}||)\right]}{||\vec{a}||^2||\vec{b}||^2},
        $}
\end{equation}
\begin{equation}
    \resizebox{0.9\linewidth}{!}{$
        \frac{\partial E_i}{\partial y_k} = \frac{2\left(\frac{\vec{a} \cdot \vec{b}}{||\vec{a}||||\vec{b}||}-\cos(\phi_{3D_{i}})\right) \left[ \frac{\partial c}{\partial y_k}||\vec{a}||||\vec{b}||-(\vec{a} \cdot \vec{b})(\frac{\partial ||\vec{a}||}{\partial y_k}||\vec{b}|| + \frac{\partial ||\vec{b}||}{\partial y_k}||\vec{a}||)\right]}{||\vec{a}||^2||\vec{b}||^2}.
        $}
\end{equation}
Therefore, for each $p_k = (x_k,y_k) \in \mathcal{P}$, where $k= 1,2,3,\dots,m$, we have
\begin{equation}
    \frac{\partial E}{\partial x_k} = \sum_{i=1}^r\frac{\partial E_i}{\partial x_k} \quad \text{and} \quad
    \frac{\partial E}{\partial y_k} = \sum_{i=1}^r\frac{\partial E_i}{\partial y_k}.
\end{equation}
Hence, the gradient of the objective function is given by
\begin{equation} \label{eqt:objective_gradient}
    \resizebox{0.9\linewidth}{!}{$
    \nabla E = \begin{bmatrix}
        \partial_{x_1}E & 
        \partial_{x_2}E &
        \partial_{x_3}E &
        \cdots& 
        \partial_{x_m}E& 
        \partial_{y_1}E& 
        \partial_{y_2}E& 
        \partial_{y_3}E& 
        \cdots&
        \partial_{y_m}E
    \end{bmatrix}^T.
        $}
\end{equation}

Altogether, in our constrained optimization problem, we minimize the objective function in Eq.~\eqref{eqt:objective_function} subject to the length-preserving constraints in Eq.~\eqref{eqt:constraint_length}, the angle-preserving constraints in Eq.~\eqref{eqt:constraint_angle}, and the no-overlap constraints in Eq.~\eqref{eqt:constraint_overlap}. For both the objective function and constraints, we further utilize their gradients as derived in Eq.~\eqref{eqt:gradient_length}, Eq.~\eqref{eqt:gradient_angle}, Eq.~\eqref{eqt:gradient_overlap}, and Eq.~\eqref{eqt:objective_gradient} in the numerical optimization process.

\subsection{Overlap correction}
As mentioned previously, while the no-overlap constraint will attempt to prevent the occurrence of overlaps, it may overconstrain the problem. In other words, in some situations, it may happen that the planar embedding result is already overlap-free but the no-overlap constraint value is still non-zero. This may affect the computational efficiency of our optimization procedure as we may need to execute a large number of iterations or even need to wait until reaching some prescribed maximum number of iterations. Therefore, one strategy that we propose is that we can consider including or excluding the no-overlap constraint in our actual optimization procedure in an alternating manner. To achieve this, we need some extra procedures to detect and correct the overlaps in the intermediate embedding results. A correction scheme is developed as follows.

\begin{figure}[t]
    \centering
    \includegraphics[width=\linewidth]{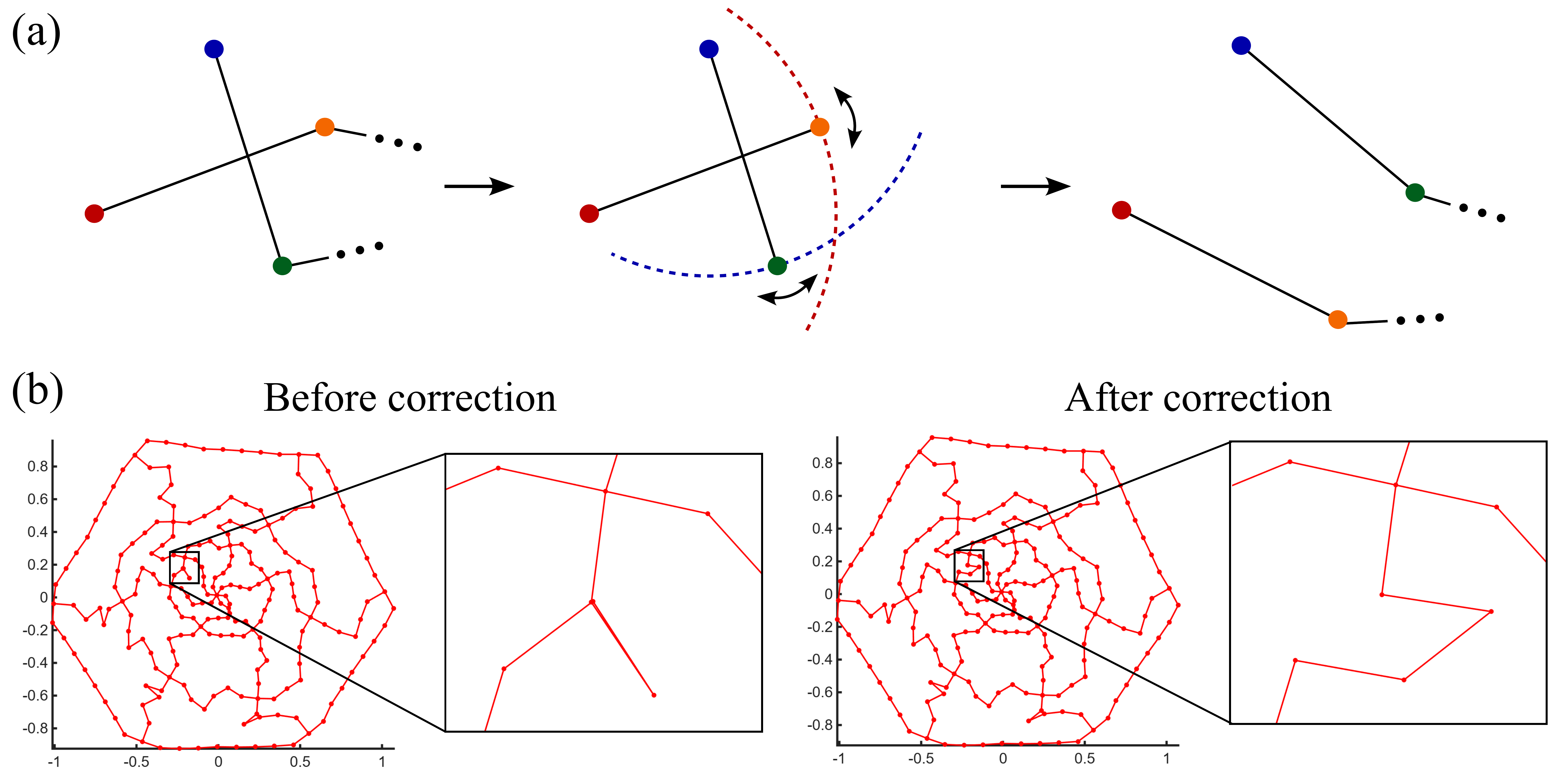}
    \caption{\textbf{An illustration of the overlap correction scheme.} (a) For each overlap, we first identify the four relevant vertices (each with a distinct color). We then fix two of them as circle centers and adjust the positions of the other two vertices along the two circles (dotted lines) to resolve the overlap. (b) An example of applying this overlap correction scheme to the planar embedding of rod-based structures. The left panel shows the configuration before the correction, with a zoom-in of the overlap position. The right panel shows the configuration after the correction.}
    \label{fig:illustration_correction}
\end{figure}

Between every iteration of each round of optimization run (with the no-overlap constraint included or excluded), we add a new process to remove any overlaps in the current configuration (see Fig.~\ref{fig:illustration_correction} for an illustration). First, we will detect all overlaps $O_n$ in the configuration. Note that each overlap $O_n$ will involve exactly two rod segments, which are associated with exactly four points. We denote them as ${\{P_{n_1},P_{n_2},P_{n_3},P_{n_4}\}}$. After getting this information, we use the Dijkstra algorithm to find the shortest path in the graph of the rod-based structure that will contain the four points. Also, we use the overlap point $O_n$ as the center and create a circle with radius $r_n={\max\{\|O_n - P_{n_i} \|, i = 1,2,3,4\}}$. With the circle we just generated, we can consider all rod segments for which the rod midpoints are inside or lying on the circle. Then, we keep every point on the computed shortest path unchanged except the points $P_{n_2}$ and $P_{n_3}$, Specifically, we allow each of those two points to shift along two circles with the center $P_{n_1}$ and $P_{n_4}$ and radii $\|P_{n_2}-P_{n_1}\|$ and $||P_{n_3}-P_{n_4}||$. We find two possible points that result in no overlaps inside the circle and in the path. In other words, we keep some key rod lengths unchanged while allowing the points to move the plane within a limited path in order to resolve the overlaps. Among all possible choices, we choose the point that gives the smallest length error with the vertex next to it in the path. If we cannot find any point that gives no overlaps, we will sacrifice some more length constraints and further consider two angle bisectors constructed by $P_{n_2}$, $O_n$ and $P_{n_3}$, $O_n$ respectively. We then find the possible points and choose the best one similarly as above.

We remark that in the above overlap correction procedure, we only control the two rod lengths $\|P_{n_2}-P_{n_1}\|$ and $||P_{n_3}-P_{n_4}||$ while the length of other neighboring rods may not be preserved. Also, as we do not impose any constraints on the angles, the angles between the rods may not be preserved. Moreover, since resolving each overlap involves the four associated vertices, if multiple overlap points $O_n$ are close to each other, a single step described above may only be able to resolve some overlaps but not all of them. Specifically, note that the flexibility in shifting the points along the circles as described above ensures that we can find a non-overlapping configuration locally at ${\{P_{n_1},P_{n_2},P_{n_3},P_{n_4}\}}$. However, there is no further guarantee or control over whether resolving one overlap will lead to overlaps elsewhere. Therefore, in practice, we set an iteration limit $N_{\max}$ as a fallback before going into the next iteration of the optimization problem, where $N_{\max} = 10$ in our experiments. Note that $N_{\max}$ can also be changed to a more adaptive threshold. For instance, since it is natural that the occurrence of overlaps may scale with the resolution of the structure, we may also set $N_{\max}$ to be dependent on the number of edges $|\mathcal{E}|$. Also, to prevent this process from increasing the number of overlaps, if the number of overlaps increases after an iteration of this process, we will break the correction process and directly move to the next optimization iteration. Ultimately, this extra overlap correction procedure aims to provide an improved initial guess for the subsequent optimization run, which will further optimize the vertex positions to reduce constraint violations. The overlap correction algorithm is summarized in Algorithm~\ref{alg:overlap}.

{\footnotesize
\begin{algorithm}[H]
    \KwData{A planar embedding $f:\mathcal{S}\rightarrow\mathbb{R}^2$, maximum iteration number $N_{\max}$.}
    \KwResult{A non-overlap transformation $T:\mathbb{R}^2\rightarrow\mathbb{R}^2$.}
    Detect all overlaps in the current configuration, and let $N_{\text{overlap}}$ be the number of overlaps\;
    Set $N_{\text{iter}} = 0$\;
    \While{$N_{\text{overlap}}>0$ and $N_{\text{iter}}<N_{\max}$}{
    \For{$i = 1, 2, \dots, N_{\text{overlap}}$}{
    Locate the four vertices that correspond to the $i$-th overlap\;
    Construct a circle with radius equal to the maximum length from the overlap to the 4 points\;
    Locate all other points from the planar embedding that lie on or inside the circle\;
    Construct two circles using the first and the fourth point as centers, with radii being the length between the second and the first point, and the length between the third and the fourth point, respectively\;
    Shift the second and third points along the two circles and find the optimal location that achieves the lowest length error and yields no overlap\;
    \If {Cannot find any point satisfying our desired result}{
     Consider the angle bisectors constructed by the second point and the overlap point, and the third point and the overlap point\;
     Allow the second and third points to shift on the angle bisectors again\;
    }}
    Update $N_{\text{overlap}}$ by the number of overlaps in the new configuration. Break the process if the number increases\;
    Set $N_{\text{iter}} = N_{\text{iter}}+1$\;
    }
    \caption{Overlap correction}
    \label{alg:overlap}
\end{algorithm}
}

\subsection{Summary}

After describing our initial embedding method, shape optimization formulation, and overlap correction procedure, we are now ready to formulate our overall framework for the low-distortion planar embedding of rod-based structures (see Fig.~\ref{fig:flowchart} for an overview).

\begin{figure}[t!]
    \centering
    \includegraphics[width=0.95\linewidth]{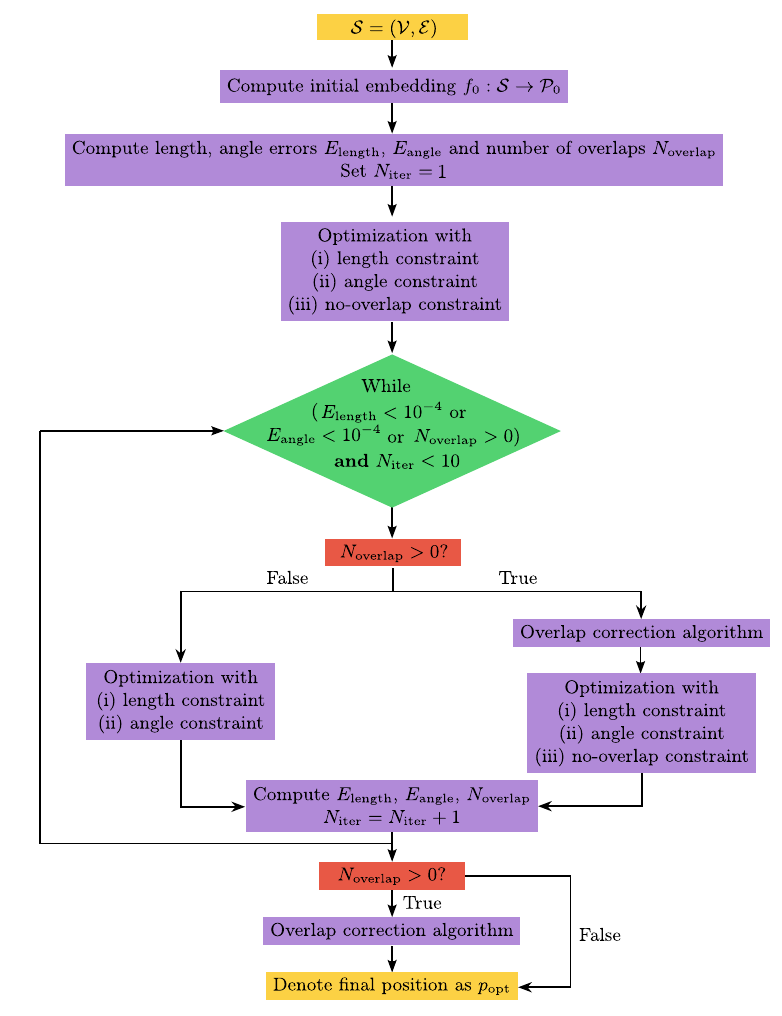}
    \caption{\textbf{A flowchart illustrating our proposed framework for low-distortion planar embedding of rod-based structures.}}
    \label{fig:flowchart}
\end{figure}

Specifically, given a 3D rod-based structure, we first apply the initial embedding method to obtain a 2D representation. We then solved our proposed constrained optimization problem to update the vertex positions on the plane for reducing the geometric distortion while preserving bijectivity in an iterative manner. In particular, for the consideration of computational efficiency, we first check and see whether the current configuration contains overlap. If it does not contain any overlaps, then at the current step we only solve the constrained optimization problem with the no-overlap constraint in Eq.~\eqref{eqt:constraint_overlap} skipped, so that we can focus on the reduction of the geometric distortion at this step without being overconstrained by the no-overlap constraint. Otherwise, we first apply the overlap correction scheme in Algorithm~\ref{alg:overlap} and then run the full constrained optimization procedure with the no-overlap constraints, with the goal of handling both the geometric distortion and bijectivity issue at this step. The above procedure is repeated until the prescribed requirements on both the geometric distortion and bijectivity are satisfied. In practice, the convergence can be determined by some prescribed threshold on the mean absolute length and angle errors  $E_{\text{length}}, E_{\text{angle}}$ and the maximum number of iterations. The proposed algorithm is summarized in Algorithm~\ref{alg:main}. 

{\footnotesize
\begin{algorithm}[H]
 \KwData{A 3D rod-based structure $\mathcal{S} = (\mathcal{V}, \mathcal{E})$.}
 \KwResult{A low-distortion planar embedding $f: \mathcal{S} \to \mathcal{P}$.}
Compute an initial embedding $f_0: \mathcal{S} \to \mathcal{P}_0$\;
Compute the length and angle quantities\;
Compute the length and angle errors $E_{\text{length}}, E_{\text{angle}}$\;
Detect all overlaps in the current configuration, and let $N_{\text{overlap}}$ be the number of overlaps\; 
Set $N_{\text{iter}} = 1$\;
Solve the full constrained optimization problem with the length, angle, and no-overlap constraints in Eq.~\eqref{eqt:constraint_length}, Eq.~\eqref{eqt:constraint_angle}, and Eq.~\eqref{eqt:constraint_overlap}\;
\While{($E_{\text{length}}> 10^{-4}$ or $E_{\text{angle}}> 10^{-4}$ or $N_{\text{overlap}}> 0$) and $N_{\text{iter}} <  10$}{
\uIf{$N_{\text{overlap}}=0$}{
Solve the constrained optimization problem without the no-overlap constraint in Eq.~\eqref{eqt:constraint_overlap}\;
}
\Else{
Apply the overlap correction algorithm (Algorithm~\ref{alg:overlap})\;
Solve the full constrained optimization problem with the length, angle, and no-overlap constraints in Eq.~\eqref{eqt:constraint_length}, Eq.~\eqref{eqt:constraint_angle}, and Eq.~\eqref{eqt:constraint_overlap}\;
}
Set $N_{\text{iter}} = N_{\text{iter}} + 1$\;
}

\If{$N_{\text{overlap}}>0$}{
Apply the overlap correction algorithm (Algorithm~\ref{alg:overlap})\;
}

Denote the final vertex positions as $p_{\text{opt}}$ and the shape optimization mapping as $g: \mathcal{P}_0 \to \mathcal{P}$. The final embedding $f$ is given by $f = g \circ f_0$, and we have $f(v_i) = (p_{\text{opt}})_i$\;
 \caption{Proposed framework for low-distortion planar embedding of rod-based structures}
 \label{alg:main}
\end{algorithm}
}

\section{Experimental results} \label{sect:experiment}

The proposed algorithms are implemented using MATLAB R2026a. All experiments are performed on a desktop computer with an Intel(R) Core(TM) i9-12900 2.40 GHz processor and 32 GB RAM. The optimization is solved using the interior-point method in IPOPT~\cite{wachter2006implementation}.

\subsection{Experiments with different rod-based structures}

We first quantify the geometric distortion of the planar embedding produced by our method. Here, the \emph{average length error} is defined as the mean of the absolute length error of each rod. The \emph{angle error} is defined as the mean of the absolute error of each angle that connects to more than two rods. Also, the number of overlaps in the planar embedding result is checked using the InterX function~\cite{interx} in MATLAB.

More specifically, for the illustration example shown in Fig.~\ref{fig:illustration}, we measure the error in length and angle for the planar embedding result and visualize the errors in Fig.~\ref{fig:errors}. Here, the rod segment length error for each rod is defined as the absolute difference between the original rod segment length in 3D $(L_i)$ and that in the resulting 2D representation $(l_i)$, i.e., $|l_i-L_i|$. As shown in Fig.~\ref{fig:errors}(a), the length error is of the order $10^{-16}$ for most rod segments, suggesting that the proposed method is highly length-preserving. We also assess the angle error at the major joints of the rod-based structure, defined as the absolute difference between the original angles in 3D $(\theta_{3D_j})$ and the corresponding angles in the final planar representation $(\theta_{2D_j})$, i.e., $|\theta_{2D_j}-\theta_{3D_j}|$. As shown in Fig.~\ref{fig:errors}(b), the angle error is very close to 0 at all major joints. This shows that the proposed method is highly accurate and is capable of preserving the required geometrical properties. Besides, one can also calculate the number of overlaps between all pairs of rod segments. The result shows that the planar embedding does not contain any overlaps. In other words, the bijectivity of the rod-based structure is well-preserved. In Fig.~\ref{fig:energy}, we further show the energy plot for this example. Here, it can be observed that the energy $E$ in Eq.~\eqref{eqt:objective_function} decreases rapidly throughout the iterations. Also, from the visualization of the intermediate configurations at different iteration steps, one can see that the result converges rapidly. This demonstrates the effectiveness of our proposed method.

\begin{figure}[t]
    \centering
    \includegraphics[width=\linewidth]{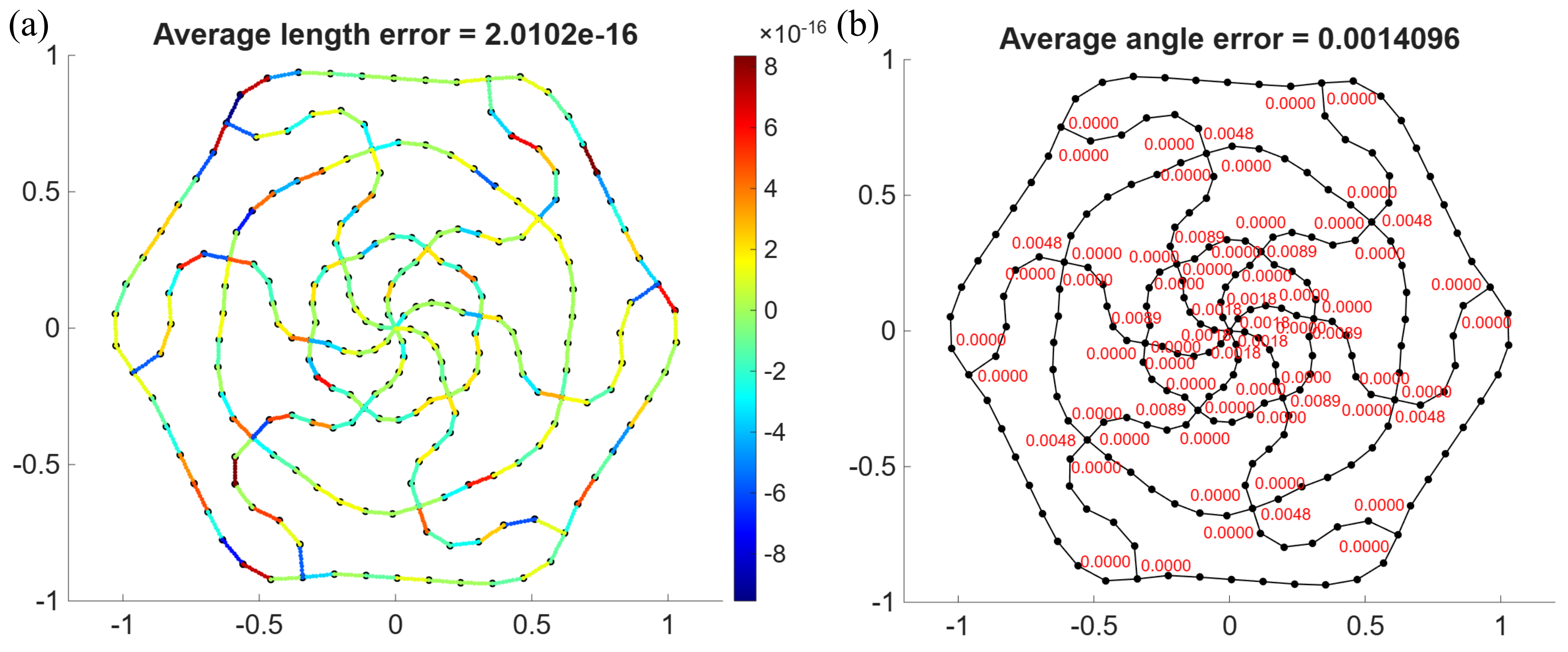}
    \caption{\textbf{Quantification of the geometric distortion of the example shown in Fig.~\ref{fig:illustration}.} (a)~The planar embedding result, with every rod segment color-coded with the length error. (b)~The angle error at each intersection point of the rods.}
    \label{fig:errors}
\end{figure}

\begin{figure}[t]
    \centering
    \includegraphics[width=\linewidth]{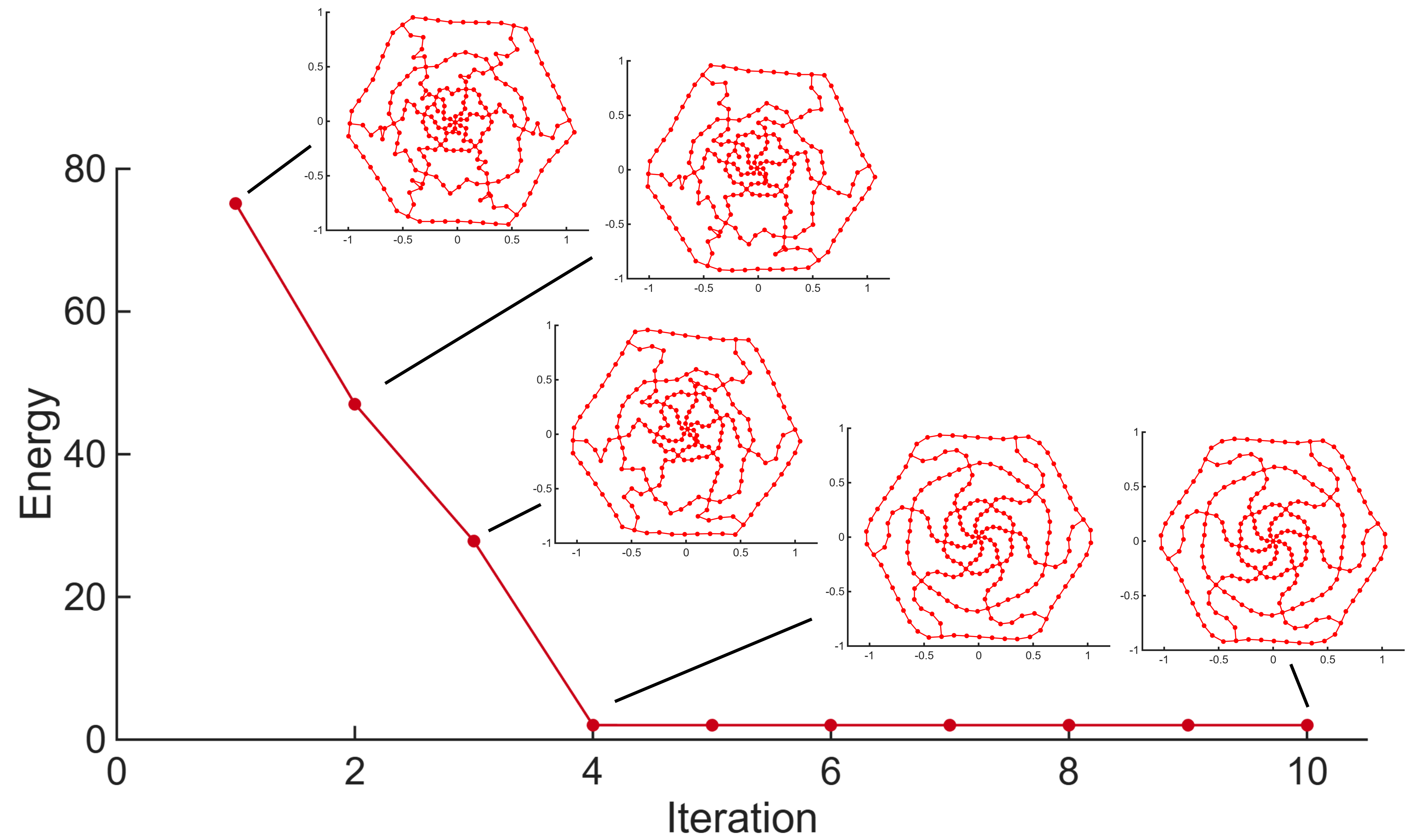}
    \caption{\textbf{The energy plot for the example shown in Fig.~\ref{fig:illustration}.} Here, we plot the energy $E$ in Eq.~\eqref{eqt:objective_function} versus the iteration number. The insets show the intermediate configurations at different iterations. It can be observed that the result converges rapidly.}
    \label{fig:energy}
\end{figure}

Besides the above example, Fig.~\ref{fig:results} (top row) shows three other examples of rod-based structures with different geometries and curvature properties. From the corresponding planar embedding results (bottom row), it can be observed that the rod segment lengths, angles at the major joints, and the bijectivity are all well-preserved. 

\begin{figure}[t]
    \centering
    \includegraphics[width=\linewidth]{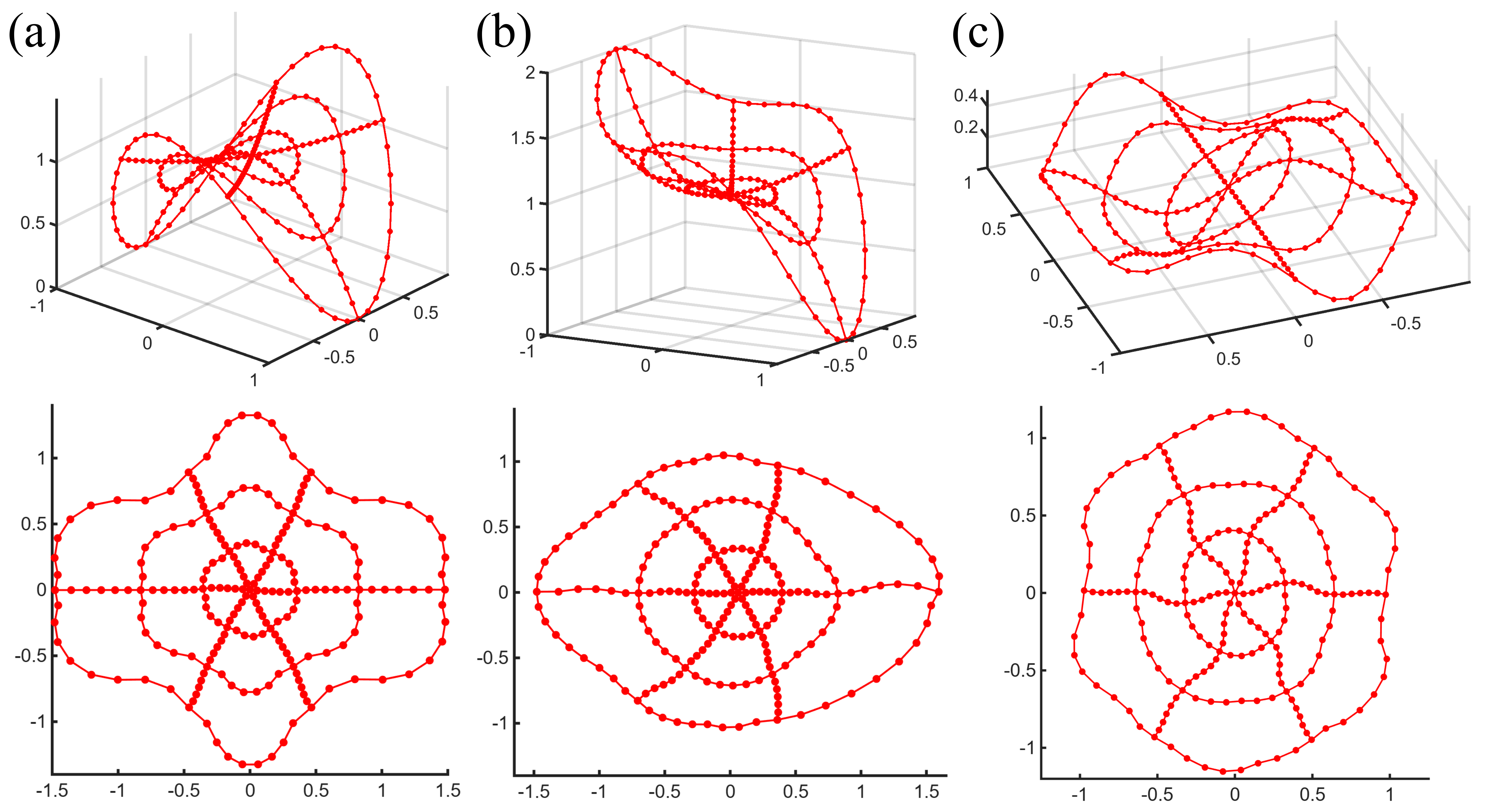}
    \caption{\textbf{Three examples of rod-based structures with different geometries (top) and the corresponding planar embeddings (bottom).} (a) A doubly curved rod-based structure. (b) A rod-based structure with a more prominent height variation. (c) A rod-based structure with multiple peaks.}
    \label{fig:results}
\end{figure}

After demonstrating the effectiveness of the proposed method in handling the above examples, we consider some more complex examples. Specifically, in Fig.~\ref{fig:result_surface}(a) we consider a rod-based structure representing a cloth surface. In Fig.~\ref{fig:result_surface}(b), we further consider a rod-based structure representing a human face. For both examples, our method is capable of producing a low-distortion planar embedding without overlaps. We remark that for these two examples, while the rod-based structures represent a certain surface, they still only consist of rod segments and joints without any specific part being identified as a ``face''. Consequently, the planar embedding only preserves the rod lengths and angles between major joints, while other geometric properties of the ``faces'' such as the face diagonal length will not be automatically preserved. The extension of our method for handling structures with a specific ``surface region'' is discussed later in Section~\ref{sect:extension}.

\begin{figure}[t!]
    \centering
    \includegraphics[width=\linewidth]{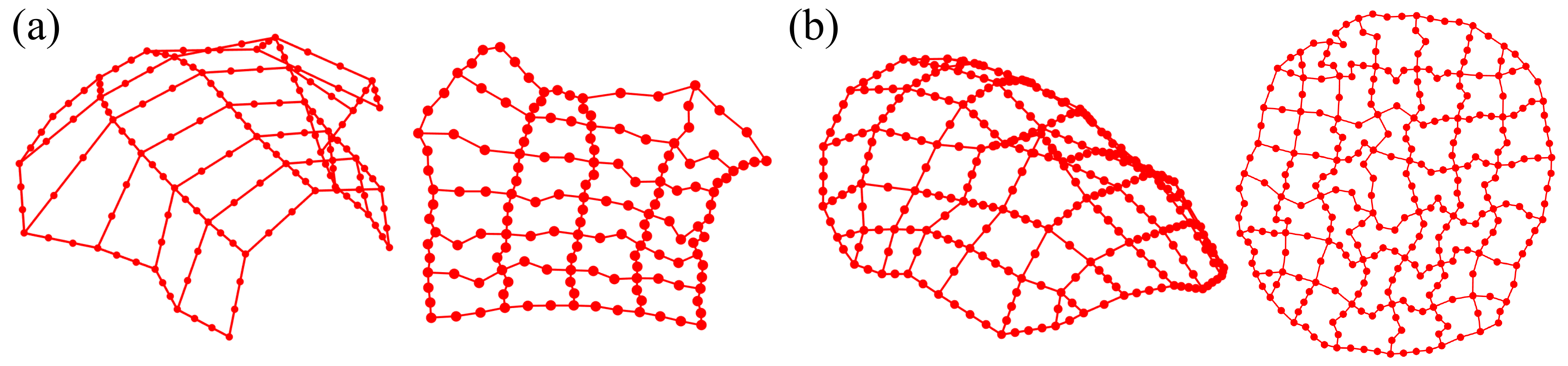}
    \caption{\textbf{Two examples of rod-based structures representing different surfaces and the corresponding planar embedding.} (a) The \emph{Cloth} model. (b) The \emph{Sophie} model.}
    \label{fig:result_surface}
\end{figure}

For a more quantitative analysis of the above experiments, Table~\ref{tab:results} shows the performance of our method (in terms of length error, angle error, and overlap number) on different examples. In all examples, we can see that the length and angle errors are very small, indicating that the embeddings exhibit low geometric distortion. We remark that in all examples, the average angle error is not as negligible as the length error, as the angle sum at a major joint at the interior of a 3D rod-based structure is affected by the curvature, while the angle sum at the corresponding point in the planar embedding is always $2\pi$. This discrepancy in curvature inevitably introduces certain angle errors. Besides, the overall length and angle errors are relatively large for the two surface examples (Fig.~\ref{fig:result_surface}(a) and Fig.~\ref{fig:result_surface}(b)) because of the large number of major joints between short rod segments. Specifically, the short rod segments unavoidably pose some limitations on the movement of the vertices on the plane throughout the optimization process, thereby causing a relatively large angle error when compared to the other examples. Besides, the number of overlaps is 0 for all examples, confirming the bijectivity of the embeddings.

\begin{table}[t]
    \centering
    \resizebox{1\linewidth}{!}{$
    \begin{tabular}{c|c|c|c}
        Example & Length error (Mean/SD) & Angle error (Mean/SD) & Overlap number \\ \hline
        Fig.~\ref{fig:illustration} 
        & $2.0 \times10^{-16}$ / $2.0 \times10^{-16}$
        & $1.4\times10^{-3}$ / $2.8\times10^{-3}$
        & 0\\
        Fig.~\ref{fig:results}(a) 
        & $2.9\times10^{-16}$ / $3.1\times10^{-16}$
        & $3.9\times10^{-4}$ / $1.1\times10^{-3}$
        & 0\\
        Fig.~\ref{fig:results}(b) 
        & $2.6\times10^{-16}$ / $2.7\times10^{-16}$
        & $7.1\times10^{-4}$ / $2.0\times10^{-3}$
        & 0\\
        Fig.~\ref{fig:results}(c) 
        & $2.8\times10^{-16}$ / $3.1\times10^{-16}$
        & $9.7\times10^{-4}$ / $3.7\times10^{-3}$
        & 0\\
        Fig.~\ref{fig:result_surface}(a) 
        & $5.6\times10^{-16}$ / $6.1\times10^{-16}$ 
        & $4.4\times10^{-2}$ / $1.6\times10^{-1}$
        & 0\\
        Fig.~\ref{fig:result_surface}(b)       
        & $5.3\times10^{-5}$ / $1.0\times10^{-5}$  
        & $3.1\times10^{-2}$ / $1.4\times10^{-1}$
        & 0
    \end{tabular}
    $}
    \caption{Performance of our method on different rod-based structures.}
    \label{tab:results}
\end{table}

\subsection{Efficiency and robustness of our method}
Besides, it is natural to study the efficiency and robustness of the method. In Fig.~\ref{fig:scaling}, we consider testing our method on several rod-based structures with the same overall geometry but different resolutions. It can be observed that the planar embedding results produced by our framework are highly similar, demonstrating the consistency of our method across different resolutions. In Table~\ref{tab:scaling_results}, we further quantify the performance for these examples. It can be observed that the length and angle errors are consistently low for all examples, and the number of overlaps is 0. Our method is also highly computationally efficient, generally taking only a few seconds for structures with hundreds of vertices. By performing a linear regression on $\log(|\mathcal{V}|)$ and $\log(\text{Time})$, we see that the computational time scales approximately as $O(|\mathcal{V}|^{1.9})$. More generally, for the IPOPT interior-point method solver~\cite{wachter2006implementation} used in solving our optimization problem, it is known that the cost depends on the sparsity pattern of the problem and hence may vary for different input rod-based structures.

\begin{figure}[t]
    \centering
    \includegraphics[width=\linewidth]{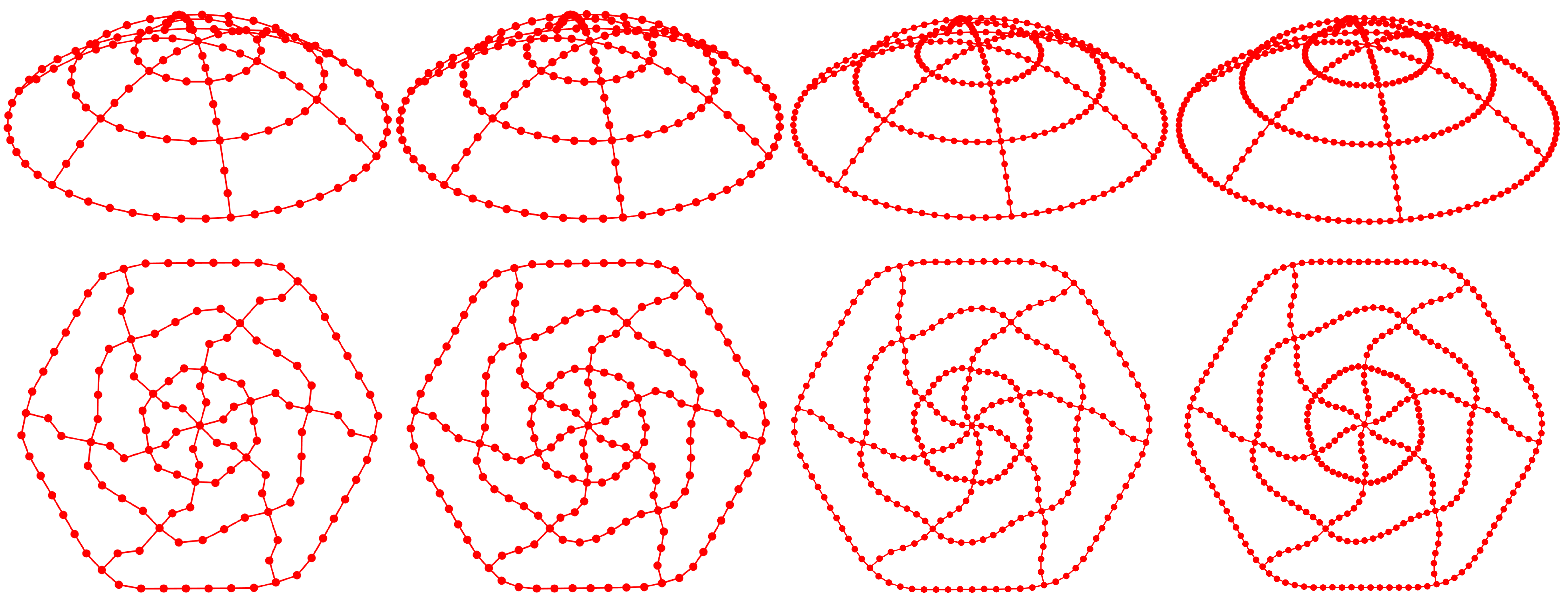}
    \caption{\textbf{Experimental results on rod-based structures with different resolutions.} The top row shows several rod-based structures with the same overall geometry but different resolutions. The bottom row shows the corresponding low-distortion planar embedding results. }
    \label{fig:scaling}
\end{figure}

\begin{table}[t]
    \centering
    \resizebox{1\linewidth}{!}{$
    \begin{tabular}{c|c|c|c|c}
        $|\mathcal{V}|$ & Time (s) &  Length error (Mean/SD) & Angle error (Mean/SD) & Overlap number\\ \hline

        133 & 1.8  
        & $1.5 \times10^{-16}$ / $1.7 \times10^{-16}$
        & $2.2\times10^{-3}$ / $1.1\times10^{-3}$
        & 0\\ 
        
        181 & 3.7  
        & $3.4 \times10^{-16}$ / $3.0 \times10^{-16}$
        & $1.2\times10^{-3}$ / $2.3\times10^{-3}$
        & 0\\

        289 & 7.6  
        & $2.8 \times10^{-16}$ / $2.7 \times10^{-16}$
        & $4.4\times10^{-4}$ / $8.5\times10^{-4}$
        & 0\\

        361 & 13.8
        & $3.7 \times10^{-16}$ / $3.5 \times10^{-16}$
        & $3.7\times10^{-4}$ / $6.7\times10^{-4}$
        & 0\\ 

    \end{tabular}
    $}
    \caption{Performance of our method on different rod-based structures with different resolutions. The four test cases here correspond to the four examples shown in Fig.~\ref{fig:scaling}.}
    \label{tab:scaling_results}
\end{table}

\begin{figure}[t!]
    \centering
    \includegraphics[width=0.95\linewidth]{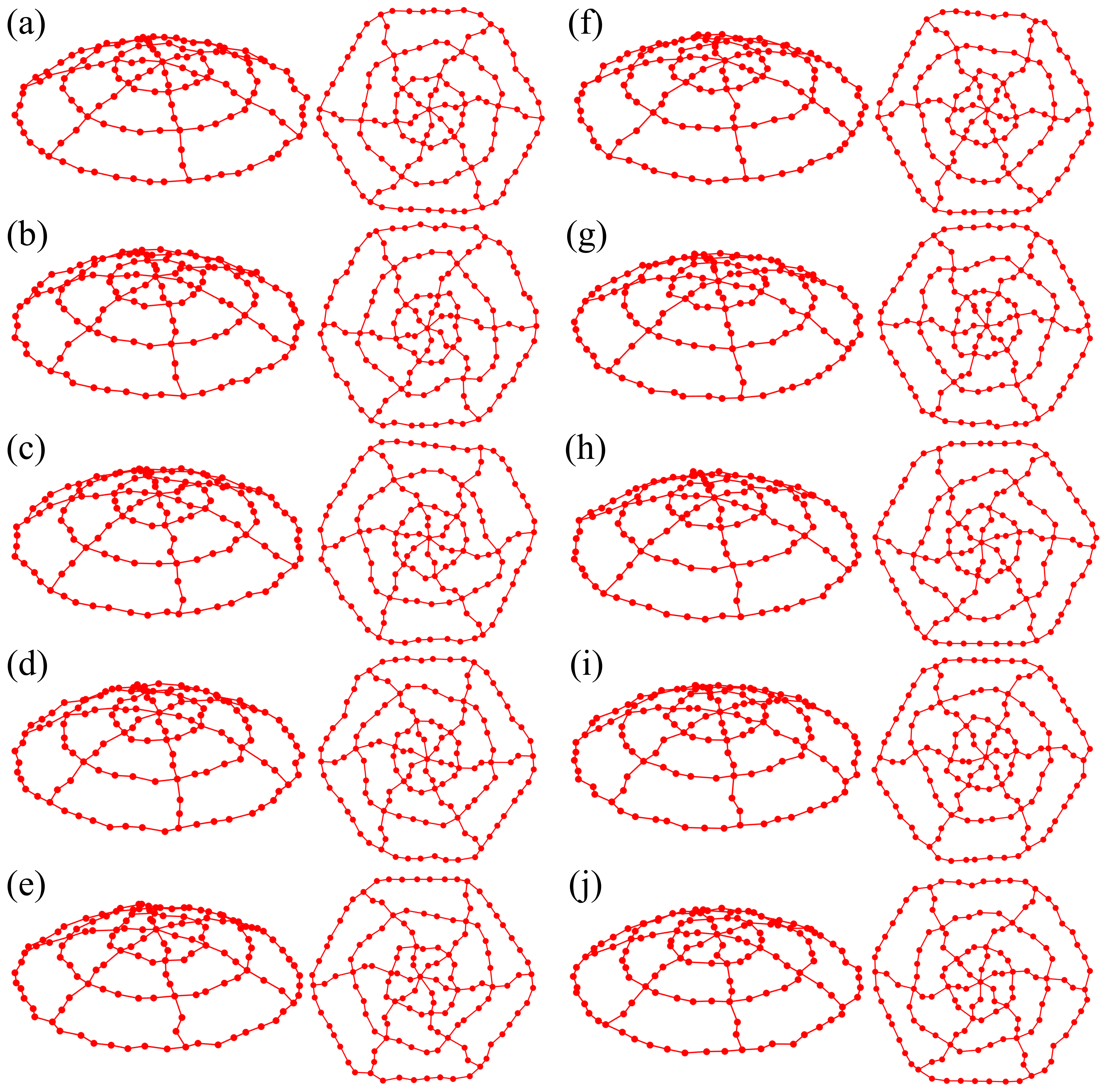}
    \caption{\textbf{Experimental results on rod-based structures with random noises.} (a)--(j) show 10 independent trials, where the left panel is the input 3D rod-based structure with 3\% uniformly distributed random noise added to the vertex coordinates and the right panel is the planar embedding result obtained by our method. }
    \label{fig:robustness}
\end{figure}

We then further consider testing the robustness of the method to small perturbations of the input rod-based structure. More specifically, we start with a given rod-based structure and consider adding 3\% uniformly distributed random noise to the coordinates of each vertex, i.e., $(x_i, y_i, z_i) \to (x_i + \Delta x_i, y_i + \Delta y_i, z_i + \Delta z_i)$, with the random numbers $\Delta x_i \in [0, 0.03 r_x]$, $\Delta y_i \in [0, 0.03 r_y]$, $\Delta z_i \in [0, 0.03 r_z]$ all sampled uniformly from the respective interval, where $r_x, r_y, r_z$ is the $x$-range, $y$-range, $z$-range of the rod-based structure. We then apply our proposed method to compute the low-distortion planar embedding and evaluate the performance. Fig.~\ref{fig:robustness} shows the results for 10 independent trials, from which we can see that the planar embedding results are highly consistent. As shown in Table~\ref{tab:robustness}, the length and angle errors of all embeddings are also consistently low, and all results are overlap-free. Altogether, the experimental results suggest that our method is highly robust.

\begin{table}[t]
    \centering
    \resizebox{1\linewidth}{!}{$
    \begin{tabular}{c|c|c|c}
        Example & Length error (Mean/SD) & Angle error (Mean/SD) & Overlap number \\ \hline
        Fig.~\ref{fig:robustness}(a)
        & $1.9 \times10^{-16}$ / $1.9 \times10^{-16}$
        & $2.2\times10^{-3}$ / $7.2\times10^{-3}$
        & 0\\
        Fig.~\ref{fig:robustness}(b)
        & $1.8 \times10^{-16}$ / $1.8 \times10^{-16}$
        & $2.3\times10^{-3}$ / $8.2\times10^{-3}$
        & 0\\
        Fig.~\ref{fig:robustness}(c)
        & $1.9 \times10^{-16}$ / $1.9 \times10^{-16}$
        & $2.8\times10^{-3}$ / $8.5\times10^{-3}$
        & 0\\
        Fig.~\ref{fig:robustness}(d)
        & $1.8 \times10^{-16}$ / $1.8 \times10^{-16}$
        & $4.4\times10^{-3}$ / $1.4\times10^{-2}$
        & 0\\
        Fig.~\ref{fig:robustness}(e)
        & $1.8 \times10^{-16}$ / $1.6 \times10^{-16}$
        & $5.5\times10^{-3}$ / $1.5\times10^{-2}$
        & 0\\
        Fig.~\ref{fig:robustness}(f) 
        & $1.7 \times10^{-16}$ / $1.8 \times10^{-16}$
        & $3.4\times10^{-3}$ / $1.2\times10^{-2}$
        & 0\\
        Fig.~\ref{fig:robustness}(g)   
        & $1.7 \times10^{-16}$ / $1.7 \times10^{-16}$
        & $3.8\times10^{-3}$ / $1.3\times10^{-2}$
        & 0\\
        Fig.~\ref{fig:robustness}(h) 
        & $1.8 \times10^{-16}$ / $1.8 \times10^{-16}$
        & $6.8\times10^{-3}$ / $2.4\times10^{-2}$
        & 0\\
        Fig.~\ref{fig:robustness}(i)   
        & $1.5 \times10^{-16}$ / $1.7 \times10^{-16}$
        & $1.7\times10^{-3}$ / $5.4\times10^{-3}$
        & 0\\
        Fig.~\ref{fig:robustness}(j)  
        & $1.6 \times10^{-16}$ / $1.6 \times10^{-16}$
        & $3.1\times10^{-3}$ / $1.1\times10^{-2}$
        & 0\\
    \end{tabular}
    $}
    \caption{Performance of our method on 10 rod-based structures with 3\% random noise added.}
    \label{tab:robustness}
\end{table}

\subsection{Comparison with the as-rigid-as-possible (ARAP) mapping}

\begin{figure}[t!]
    \centering
    \includegraphics[width=\linewidth]{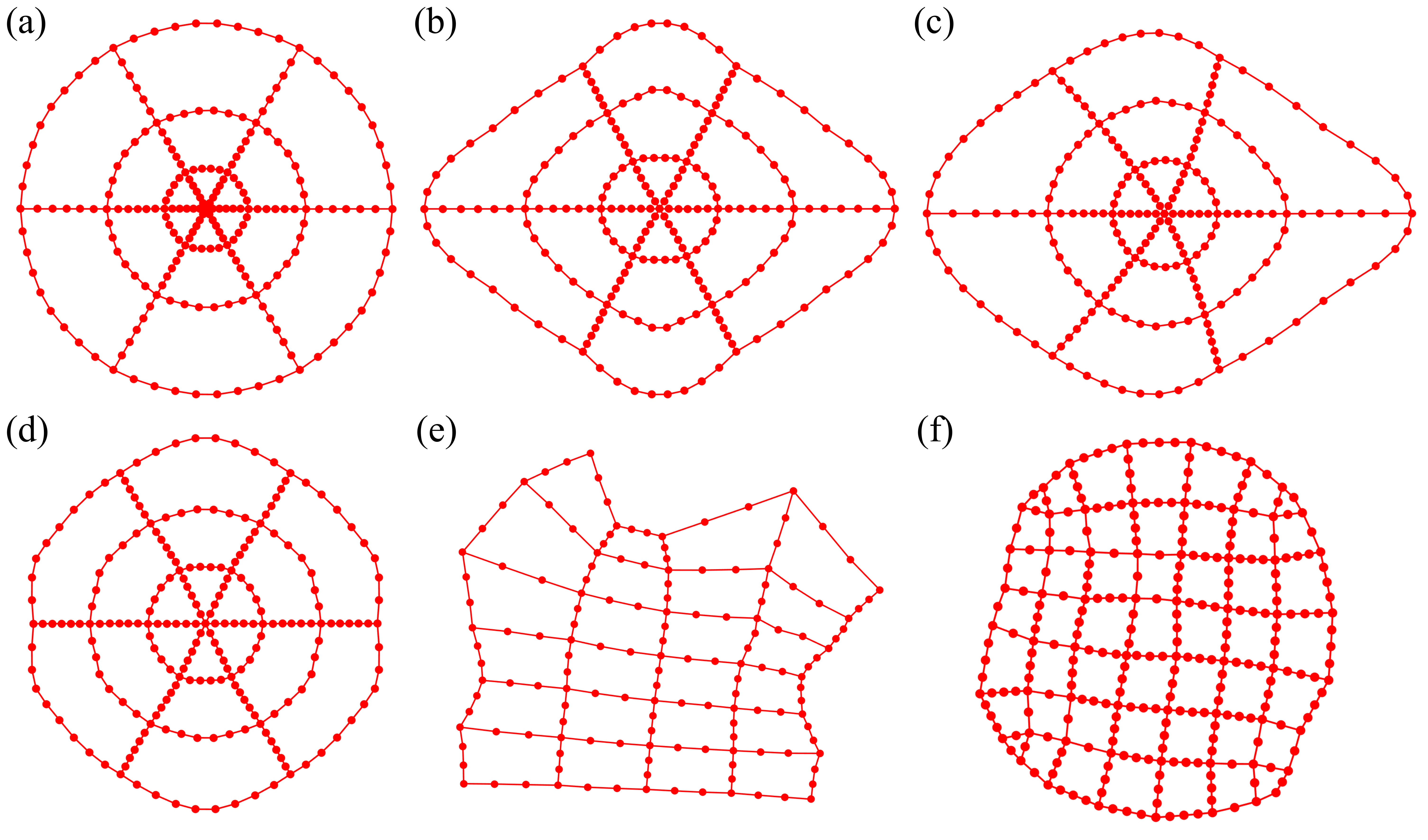}
    \caption{\textbf{The planar embedding results of several rod-based structures obtained by the as-rigid-as-possible (ARAP) mapping method~\cite{sorkine2007rigid,liu2008local}.} (a) The result of the structure in Fig.~\ref{fig:illustration}. (b)--(d) The results of the structures in Fig.~\ref{fig:results}(a)--(c) respectively. (e) The result of \emph{Cloth} model in Fig.~\ref{fig:result_surface}(a). (f) The result of the \emph{Sophie} model in Fig.~\ref{fig:result_surface}(b).}
    \label{fig:result_arap}
\end{figure}

Note that the task of representing a 3D object on a 2D domain is also widely studied in the field of surface parameterization, which often seeks a suitable mapping of a 3D discretized surface onto a planar domain with minimal geometric distortion. It is natural to ask how our proposed low-distortion planar embedding method compares with the existing surface parameterization methods when applied to the 3D rod-based structures. In particular, a parameterization method closely related to our proposed approach is the as-rigid-as-possible (ARAP) mapping method~\cite{sorkine2007rigid,liu2008local}, which aims to preserve the geometric details (including length and angle) of the mesh elements as much as possible. Therefore, here we focus on comparing our method with the ARAP method. 

For the computation of the ARAP mapping, we utilize the \texttt{arap} function available in the open-source geometry processing toolbox gptoolbox~\cite{gptoolbox}. Since the ARAP mapping method requires a triangulated surface mesh as input, we first triangulate each input 3D rod-based structure to obtain a simply-connected open triangulated surface. We then run the ARAP mapping method with the same initial 2D guess used in our framework. To allow the overall shape to deform freely on the plane as in our method, no boundary constraints are enforced in the ARAP mapping computation. After obtaining the ARAP mapping result for the triangulated surface, we extract its vertex coordinates together with the original connectivity information to form a planar embedding of the rod-based structure.

\begin{table}[t]
    \centering
    \resizebox{1\linewidth}{!}{$
    \begin{tabular}{c|c|c|c|c|c|c}
    \multirow{ 2}{*}{Example} & \multicolumn{3}{c|}{Our proposed method} & \multicolumn{3}{c}{As-rigid-as-possible (ARAP) mapping} \\ \cline{2-7}
     & Length error  & Angle error & Overlap \# & Length error & Angle error & Overlap \# \\ \hline
        Fig.~\ref{fig:illustration} 
        & $2.0 \times10^{-16}$  & $1.4\times10^{-3}$ & 0 & $1.5 \times 10^{-1} $& $9.2 \times 10^{-2}$ & 0\\
        Fig.~\ref{fig:results}(a) 
        & $2.9\times10^{-16}$ & $3.9\times10^{-4}$ & 0 & $5.4 \times 10^{-2}$ & $5.5 \times 10^{-2}$ & 0\\
        Fig.~\ref{fig:results}(b) 
        & $2.6\times10^{-16}$ & $7.1\times10^{-4}$ & 0 & $2.9\times10^{-2}$ & $3.6\times10^{-2}$ & 0\\
        Fig.~\ref{fig:results}(c) 
        &  $2.8\times10^{-16}$  & $9.7\times10^{-4}$  & 0 & $4.1\times10^{-2}$ & $8.0\times10^{-2}$ & 0\\
        Fig.~\ref{fig:result_surface}(a) 
        &  $5.6\times10^{-16}$  & $4.4\times10^{-2}$ & 0 & $5.1\times10^{-2}$  & $1.1\times10^{-1}$ & 0\\
        Fig.~\ref{fig:result_surface}(b)   
        &  $5.3\times10^{-5}$ & $3.1\times10^{-2}$ & 0 & $3.8\times10^{-2}$  & $5.1\times10^{-2}$ & 0
    \end{tabular}
    $}
    \caption{Comparison between our proposed method and the as-rigid-as-possible (ARAP) mapping method~\cite{sorkine2007rigid,liu2008local} on different rod-based structures. For each method, we compute the average length error for all rods, the average angle error for all angles at the major joints, and the number of overlaps.}
    \label{tab:results_comparison}
\end{table}

Fig.~\ref{fig:result_arap} shows the ARAP mapping results for several rod-based structures presented earlier. It can be observed that the results are generally consistent with those obtained by our proposed method and are overlap-free. One exception is the result in Fig.~\ref{fig:result_arap}(a), in which one can clearly see that the rods are highly squeezed in the central region in the mapping result. By further evaluating the length and angle errors (Table~\ref{tab:results_comparison}), one can see that our proposed method outperforms the ARAP method in preserving both rod lengths and angles by several orders of magnitude. A possible explanation is that the ARAP method treats the entire structure as a triangulated mesh, thereby also taking certain regions that are not important (such as the voids in the rod-based structures) in its consideration of the optimal low-distortion planar mapping. By contrast, our proposed method focuses exclusively on the vertices and rod segments of the input rod-based structure and can therefore effectively exploit the voids to handle the length- and angle-preserving constraints. Altogether, the comparison with the ARAP method has demonstrated the unique advantage of our proposed method for handling 3D rod-based structures.

\section{Extension to hybrid structures} \label{sect:extension}

In some cases, the structure may consist not only of one-dimensional rods but also of some surface regions. Besides preserving the geometry of the rods, one may also want to preserve the geometry of those surface regions in the embedding result as much as possible. Here, we consider extending our approach to handle the embedding of such hybrid structures. 

Specifically, for each surface region that is desired to be preserved, we can first include additional rod segments to form a local mesh representation. Then, we run our proposed algorithm with the geometry of these additional rods also considered. Because of the conditions required in our algorithm, the lengths and angles of these additional rods will also be preserved as much as possible. In other words, the optimization result will be a low-distortion planar embedding of the hybrid structure, with the isometric distortion (in both lengths and angles) reduced as much as possible.

\begin{figure}[t]
    \centering
    \includegraphics[width=\linewidth]{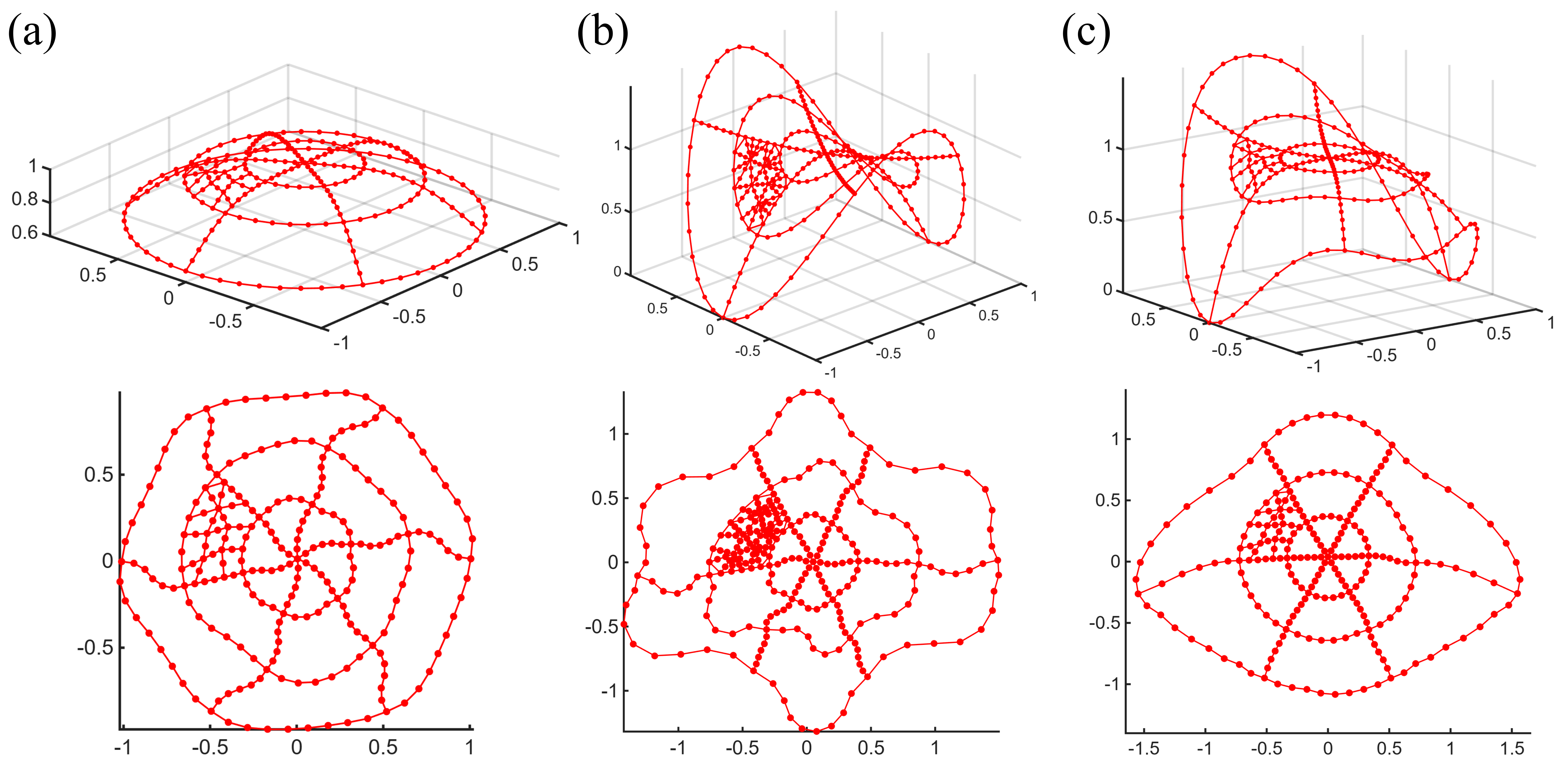}
    \caption{\textbf{Three examples of hybrid structures with different geometries (top) and the corresponding planar embeddings (bottom).} (a) A dome-shaped hybrid structure. (b) A doubly curved hybrid structure. (c) A hybrid structure with multiple peaks.}
    \label{fig:results_hybrid}
\end{figure}

\begin{table}[t]
    \centering
    \resizebox{1\linewidth}{!}{$
    \begin{tabular}{c|c|c|c}
        Example & Length error (Mean/SD) & Angle error (Mean/SD) & Overlap number \\ \hline
        Fig.~\ref{fig:results_hybrid}(a) 
        & $2.5 \times10^{-4}$ / $1.7 \times10^{-3}$
        & $5.6 \times10^{-3}$ / $1.5 \times10^{-2}$
        & 0\\ 
        Fig.~\ref{fig:results_hybrid}(b) 
        & $4.4 \times10^{-4}$ / $3.9 \times10^{-3}$
        & $3.7 \times10^{-3}$ / $1.2 \times10^{-2}$ 
        & 0\\ 
        Fig.~\ref{fig:results_hybrid}(c) 
        &  $7.2 \times10^{-4}$ / $4.3 \times10^{-3}$
        & $6.7 \times10^{-3}$ / $2.1 \times10^{-2}$
        & 0\\
    \end{tabular}
    $}
    \caption{Performance of our method on different hybrid structures.}
    \label{tab:results_hybrid}
\end{table}

To examine the performance of this approach, Fig.~\ref{fig:results_hybrid} shows three examples of hybrid structures with different geometries. First, in Fig.~\ref{fig:results_hybrid}(a) we consider a positively curved structure formed primarily by rods, with an additional surface region to be preserved. We then consider a doubly curved structure with an additional surface region as shown in Fig.~\ref{fig:results_hybrid}(b). Finally, we consider another curved structure with greater asymmetry and fluctuations in shape, again with an additional surface region to be preserved. For all these hybrid structure examples, our proposed algorithm is capable of producing the desired planar embeddings.

Table~\ref{tab:results_hybrid} shows the detailed performance analyses of our method on different hybrid structures. It can be observed that while the inclusion of the surface regions makes the optimization problem overconstrained, both the length and angle errors remain very small. Also, the number of overlaps is 0 in all examples. This demonstrates the effectiveness and generalizability of our proposed framework for handling a wider class of rod-based structures. We remark that one can further expand the coverage of the ``surface region'' and consider the case where the entire input is a triangle or quad mesh. For the case of triangle meshes, all triangle edges are considered as rod segments and all angles of all triangle elements are considered as the major joint angles to be preserved. For quad meshes, one may further add one diagonal rod segment for each quadrilateral face to prevent shearing and apply the planar embedding method. The proposed algorithm will aim to produce a planar parameterization of the input mesh with the isometric distortion reduced as much as possible. However, since isometric mappings of a 3D shape onto the 2D plane are generally impossible to achieve, the length and angle constraints may not be fully satisfied.

\section{2D-to-3D morphing process of the planar embeddings} \label{sect:morphing}

As discussed earlier in this work, the low-distortion planar embeddings of rod-based structures produced by our proposed algorithm can be utilized for various applications. In particular, the planar embeddings can be regarded as simplified 2D representations of the original 3D rod-based structures, which facilitate their manufacturing and storage. It is therefore natural to ask whether we can restore the 3D structures from the planar embedding results. 

Here, we simulate the 2D-to-3D morphing process 
using the deployment simulation approach in~\cite{choi2019programming}, which considers a spring energy model to simulate the deployment process. More specifically, in the spring energy model, all rods in the 2D planar embedding result $(\mathcal{P},\mathcal{E_{\text{2D}}})$ are treated as linear springs. The rest length of the spring representing rod $[p_{i}, p_{j}]$ is given by the length of the rod segment, i.e., $L_{ij} = \|p_{i} -p_{j}\|$. We can then consider the entire deployment process to be within the time interval $[0,1]$ and track the time-dependent position $\mathbf{p}_i(t) \in \mathbb{R}^3$ of all vertices $p_i$, with $\mathbf{p}_i(t=0) = p_i$. We then apply a pulling force on a selected group of vertices iteratively, starting from the 2D planar configuration of the structure. Specifically, for these selected group of vertices $\{p_s\}$, we set $\mathbf{p}_s(t=1) = v_s$, where $v_s$ is the vertex position of the original input 3D rod-based structure. The trajectory of these selected vertices throughout the deployment process is then given by
\begin{equation}
    \mathbf{p}_s(t) = p_s + t(v_s - p_s).
\end{equation}
The rest of the structure is then also deployed following a spring energy minimization process, with the vertex coordinates $\mathbf{p}_1(t), \dots, \mathbf{p}_{|\mathcal{V}|}(t)$ determined by
\begin{equation}
    \min_{\mathbf{p}_1(t), \dots, \mathbf{p}_{|\mathcal{V}|}(t)} \frac{1}{|\mathcal{E_{\text{2D}}}|} \sum_{\{(i,j): [p_{i}, p_{j}] \in \mathcal{E_{\text{2D}}}\}} \left(\frac{\|\mathbf{p}_i(t)-\mathbf{p}_j(t)\|-L_{ij}}{L_{ij}}\right)^2,
\end{equation}
constrained by the prescribed position of the selected pulling points at different time $t$. The deployment ends when the selected pulling points reach the final position of the 3D rod-based structure, i.e., $t=1$. In other words, iteratively solving the above spring energy minimization problem at different time $t\in [0,1]$ gives a continuous deployment path from the 2D configuration to a final 3D configuration, where the selected pulling points largely control the motion and the remaining points will follow them naturally, with the position optimized to reduce the distortion in the actual rod length. The springs provide flexibility in simulating deployment without precisely specifying how all points move throughout the process.

To demonstrate this idea, we consider the 2D-to-3D deployment process of the rod-based example in Fig.~\ref{fig:illustration}. In Fig.~\ref{fig:results_deployment}, we show snapshots of the deployment of the structure from the planar embedding state produced by our algorithm to the final 3D configuration. Here, we select only the rod intersection points (in black) as the pulling points and apply the above-mentioned spring energy model to simulate the deployment over time. It can be observed that the structure effectively morphs from the 2D state into the final 3D state, which matches the shape of the original 3D rod-based structure very well. From this experiment, we can see that the planar embeddings produced by our proposed algorithm can be effectively used in practical applications.

\begin{figure}[t]
    \centering
    \includegraphics[width=\linewidth]{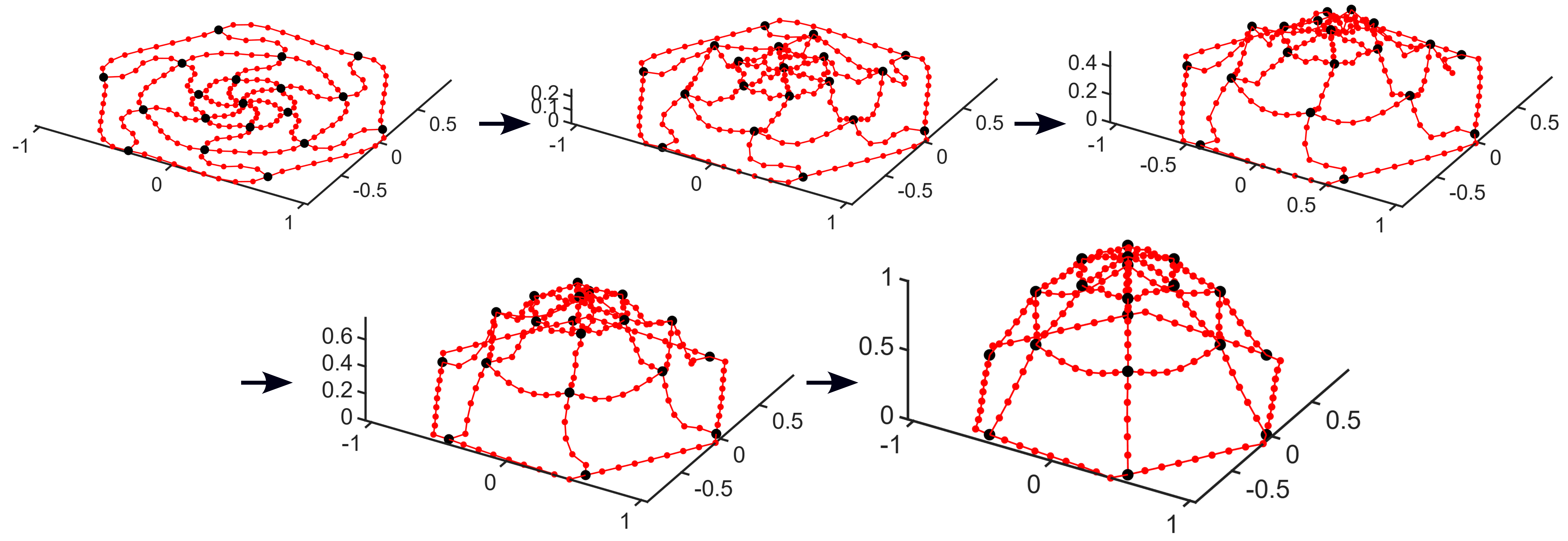}
    \caption{\textbf{Simulated deployment process of a rod-based structure from the planar configuration obtained by our low-distortion embedding method (top left) to the final 3D configuration (bottom right).} The black nodes indicate the pulling points. }
    \label{fig:results_deployment}
\end{figure}

\section{Conclusion} \label{sect:conclusion}

In this work, we have developed a novel method for the planar embedding of 3D structures composed of rod segments. Specifically, our method preserves the rod segment length and the intersection angles between the rods, thereby maintaining the key geometrical properties of the rod-based structures. Also, our method effectively prevents overlaps between rod segments in the planar embedding, thereby facilitating the practical fabrication and use of the planar representations. We have demonstrated the applicability of the method to a wide range of rod-based structures with different shapes. We have also demonstrated the feasibility of extending the formulation to hybrid structures. Using a simple mechanical model, we have further shown that the deployment from the 2D embedding to the desired 3D shape can be easily achieved. Altogether, our work paves a new way to the representation and simplification of rod-based structures.

In our future work, we plan to extend our approach and consider a broader range of rod-based structures with spatially varying elasticity conditions, allowing some parts of the structure to become more flexible and exhibit greater geometric distortions. To achieve this, one possible approach is to turn the equality constraints on lengths and angles in our current optimization problem into inequality constraints, imposing different bounds on different edges and angles to control their distortions. We also plan to investigate how the convergence behavior of the proposed algorithms is affected by relaxing the length- and angle-preservation conditions.

\section*{Declaration of Competing Interest}
The authors declare no conflict of interest.
  
\section*{Acknowledgements}
We thank Dr.~Mahmoud Shaqfa (ETH Zurich) for useful discussions. 

\bibliographystyle{elsarticle-num}
\bibliography{reference}

\begin{thebibliography}{10}
\expandafter\ifx\csname url\endcsname\relax
  \def\url#1{\texttt{#1}}\fi
\expandafter\ifx\csname urlprefix\endcsname\relax\def\urlprefix{URL }\fi
\expandafter\ifx\csname href\endcsname\relax
  \def\href#1#2{#2} \def\path#1{#1}\fi

\bibitem{wang2023rectifying}
B.~Wang, H.~Wang, E.~Schling, H.~Pottmann, Rectifying strip patterns, ACM Transactions on Graphics 42~(1) (2023) 1--18.

\bibitem{sydney2016biomimetic}
A.~Sydney~Gladman, E.~A. Matsumoto, R.~G. Nuzzo, L.~Mahadevan, J.~A. Lewis, Biomimetic {4D} printing, Nature Materials 15~(4) (2016) 413--418.

\bibitem{risso2022highly}
G.~Risso, M.~Sakovsky, P.~Ermanni, A highly multi-stable meta-structure via anisotropy for large and reversible shape transformation, Advanced Science 9~(26) (2022) 2202740.

\bibitem{liu2023deployable}
D.~Liu, D.~Pellis, Y.-C. Chiang, F.~Rist, J.~Wallner, H.~Pottmann, Deployable strip structures, ACM Transactions on Graphics 42~(4) (2023) 1--16.

\bibitem{wang2004level}
X.~Wang, Y.~Mei, M.~Y. Wang, Level-set method for design of multi-phase elastic and thermoelastic materials, International Journal of Mechanics and Materials in Design 1~(3) (2004) 213--239.

\bibitem{baek2018form}
C.~Baek, A.~O. Sageman-Furnas, M.~K. Jawed, P.~M. Reis, Form finding in elastic gridshells, Proceedings of the National Academy of Sciences 115~(1) (2018) 75--80.

\bibitem{qin2020genetic}
L.~Qin, W.~Huang, Y.~Du, L.~Zheng, M.~K. Jawed, Genetic algorithm-based inverse design of elastic gridshells, Structural and Multidisciplinary Optimization 62 (2020) 2691--2707.

\bibitem{guseinov2020programming}
R.~Guseinov, C.~McMahan, J.~P{\'e}rez, C.~Daraio, B.~Bickel, Programming temporal morphing of self-actuated shells, Nature Communications 11~(1) (2020) 237.

\bibitem{martin2021surface}
A.~Mart{\'\i}n-Pastor, F.~Gonz{\'a}lez-Quintial, Surface discretisation with rectifying strips on geodesics, Nexus Network Journal 23~(3) (2021) 565--582.

\bibitem{panetta2021computational}
J.~Panetta, F.~Isvoranu, T.~Chen, E.~Si{\'e}fert, B.~Roman, M.~Pauly, Computational inverse design of surface-based inflatables, ACM Transactions on Graphics 40~(4) (2021) 1--14.

\bibitem{ren2024computational}
Y.~Ren, J.~Panetta, S.~Suzuki, U.~Kusupati, F.~Isvoranu, M.~Pauly, Computational homogenization for inverse design of surface-based inflatables, ACM Transactions on Graphics 43~(4) (2024) 1--18.

\bibitem{schling2022designing}
E.~Schling, H.~Wang, S.~Hoyer, H.~Pottmann, Designing asymptotic geodesic hybrid gridshells, Computer-Aided Design 152 (2022) 103378.

\bibitem{floater2005surface}
M.~S. Floater, K.~Hormann, Surface parameterization: a tutorial and survey, Advances in multiresolution for geometric modelling (2005) 157--186.

\bibitem{sheffer2007mesh}
A.~Sheffer, E.~Praun, K.~Rose, Mesh parameterization methods and their applications, Foundations and trends in computer graphics and vision 2~(2) (2007) 105--171.

\bibitem{levy2002least}
B.~L{\'e}vy, S.~Petitjean, N.~Ray, J.~Maillot, Least squares conformal maps for automatic texture atlas generation, ACM Transactions on Graphics 21~(3) (2002) 362--371.

\bibitem{desbrun2002intrinsic}
M.~Desbrun, M.~Meyer, P.~Alliez, Intrinsic parameterizations of surface meshes, Computer Graphics Forum 21~(3) (2002) 209--218.

\bibitem{mullen2008spectral}
P.~Mullen, Y.~Tong, P.~Alliez, M.~Desbrun, Spectral conformal parameterization, Computer Graphics Forum 27~(5) (2008) 1487--1494.

\bibitem{jin2008discrete}
M.~Jin, J.~Kim, F.~Luo, X.~Gu, Discrete surface {R}icci flow, IEEE Transactions on Visualization and Computer Graphics 14~(5) (2008) 1030--1043.

\bibitem{choi2021efficient}
G.~P.~T. Choi, Efficient conformal parameterization of multiply-connected surfaces using quasi-conformal theory, Journal of Scientific Computing 87~(3) (2021) 70.

\bibitem{zou2011authalic}
G.~Zou, J.~Hu, X.~Gu, J.~Hua, Authalic parameterization of general surfaces using {L}ie advection, IEEE Transactions on Visualization and Computer Graphics 17~(12) (2011) 2005--2014.

\bibitem{zhao2013area}
X.~Zhao, Z.~Su, X.~D. Gu, A.~Kaufman, J.~Sun, J.~Gao, F.~Luo, Area-preservation mapping using optimal mass transport, IEEE Transactions on Visualization and Computer Graphics 19~(12) (2013) 2838--2847.

\bibitem{choi2018density}
G.~P.~T. Choi, C.~H. Rycroft, Density-equalizing maps for simply connected open surfaces, SIAM Journal on Imaging Sciences 11~(2) (2018) 1134--1178.

\bibitem{sorkine2007rigid}
O.~Sorkine, M.~Alexa, et~al., As-rigid-as-possible surface modeling, Symposium on Geometry Processing 4 (2007) 109--116.

\bibitem{liu2008local}
L.~Liu, L.~Zhang, Y.~Xu, C.~Gotsman, S.~J. Gortler, A local/global approach to mesh parameterization, Computer Graphics Forum 27~(5) (2008) 1495--1504.

\bibitem{wang2018novel}
Z.~Wang, Z.~Luo, J.~Zhang, E.~Saucan, A novel local/global approach to spherical parameterization, Journal of Computational and Applied Mathematics 329 (2018) 294--306.

\bibitem{choi2022recent}
G.~P.~T. Choi, L.~M. Lui, Recent developments of surface parameterization methods using quasi-conformal geometry, Handbook of Mathematical Models and Algorithms in Computer Vision and Imaging: Mathematical Imaging and Vision (2022) 1--41.

\bibitem{pan2022constructing}
M.~Pan, F.~Chen, Constructing planar domain parameterization with {HB}-splines via quasi-conformal mapping, Computer Aided Geometric Design 97 (2022) 102133.

\bibitem{pan2023g1}
M.~Pan, R.~Zou, W.~Tong, Y.~Guo, F.~Chen, $g^1$-smooth planar parameterization of complex domains for isogeometric analysis, Computer Methods in Applied Mechanics and Engineering 417 (2023) 116330.

\bibitem{kapl2018construction}
M.~Kapl, G.~Sangalli, T.~Takacs, Construction of analysis-suitable $g^1$ planar multi-patch parameterizations, Computer-Aided Design 97 (2018) 41--55.

\bibitem{zou2025mat}
R.~Zou, M.~Pan, Y.~Zheng, F.~Chen, W.~Tong, {MAT}-parameterization: Volumetric multi-patch parameterizations of complex domains for isogeometric analysis using {MAT}-based decomposition, Computer Methods in Applied Mechanics and Engineering 445 (2025) 118187.

\bibitem{tutte1963draw}
W.~T. Tutte, How to draw a graph, Proceedings of the London Mathematical Society 3~(1) (1963) 743--767.

\bibitem{choi2019programming}
G.~P.~T. Choi, L.~H. Dudte, L.~Mahadevan, Programming shape using kirigami tessellations, Nature Materials 18~(9) (2019) 999--1004.

\bibitem{wachter2006implementation}
A.~W{\"a}chter, L.~T. Biegler, On the implementation of an interior-point filter line-search algorithm for large-scale nonlinear programming, Mathematical Programming 106~(1) (2006) 25--57.

\bibitem{interx}
{NS}, Curve intersections, {MATLAB} {C}entral {F}ile {E}xchange, \url{https://www.mathworks.com/matlabcentral/fileexchange/22441-curve-intersections}, accessed on June 26, 2025 (2010).

\bibitem{gptoolbox}
A.~Jacobson, et~al., {gptoolbox}: Geometry processing toolbox, http://github.com/alecjacobson/gptoolbox (2024).

\end{thebibliography}

\appendix
\section{Proof of Theorem 1}
    By assumption, we have $\sum_{i = 1}^n  \text{Area}(T_i) = \text{Area}(\mathcal{B})$ for the shape optimization result.
    
    Now, suppose there is still at least one overlap in the shape optimization result. Then, there exist two triangles $T_k$ and $T_j$ such that $T_k \bigcap T_j \neq \emptyset$. Let $\mathcal{O} = T_k\bigcap T_j$. Since $\mathcal{O}$ is non-empty, we have $\text{Area}(\mathcal{O})>0$. 
    
    In the summation $\sum\limits_{i = 1}^n  \text{Area}(T_i)$, $\text{Area}(\mathcal{O})$ is counted in both $\text{Area}(T_k)$ and $\text{Area}(T_j)$. Therefore, 
    \begin{equation}
       \text{Area}(\mathcal{B})= \sum_{i = 1}^n  \text{Area}(T_i) = \text{Area}(\mathcal{O}) + \text{Area}(\mathcal{B})>\text{Area}(\mathcal{B}),
    \end{equation}
    which leads to a contradiction. Therefore, the shape optimization result will not have any overlaps if $\sum_{i = 1}^n  \text{Area}(T_i) = \text{Area}(\mathcal{B})$.

\end{document}